\newif\ifhyper\IfFileExists{hyperref.sty}{\hypertrue}{\hyperfalse}
\ifhyper\usepackage{hyperref}\fi
\newif\ifdraft
\numberwithin{equation}{section}
\numberwithin{figure}{section}
\newtheorem{theorem}{Theorem}
\numberwithin{theorem}{section}
\newtheorem{corollary}[theorem]{Corollary}
\newtheorem{lemma}[theorem]{Lemma}
\newtheorem{definition}[theorem]{Definition}
\theoremstyle{definition}
\newtheorem{remark}[theorem]{Remark}
\newtheorem{claim}[theorem]{Claim}
\newcommand{\E}{\mathbb{E}}
\newcommand{\G}{\mathbb{G}}
\newcommand{\PP} {{\mathbb P}}
\newcommand{\R}{\mathbb{R}}
\newcommand{\N}{\mathbb{N}}
\newcommand{\EEE}{\mathcal{E}}
\newcommand{\WW}{{\bf W}}
\newcommand{\GG}{{\bf G}}
\newcommand{\I}{\mathbb{I}}
\newcommand{\PPP} {{\mathcal P}}
\newcommand{\WWW}{\mathcal{W}}
\newcommand{\QQ} {{\mathcal Q}}
\newcommand{\UU}{{\bf U}}
\newcommand{\VV}{{\bf V}}
\newcommand{\FF}{{\bf F}}
\newcommand{\HHH}{{\bf H}}
\newcommand{\hh}{\mathbb{H}}
\newcommand{\RRR}{\mathcal{R}}
\newcommand{\Ss}{\mathcal{S}}
\newcommand{\TTT}{\mathcal{T}}
\newcommand{\FFF}{\mathbf{F}}
\def\argmax{\mathrm{argmax}}
\def\sgn{\mathrm{sgn}}
\def \_reg {\rightarrow_{\bf reg}}
\def\maxdeg/{\Delta}
\def\tf{{\rm tf}}
\begin{document}
\newcommand{\na}{\mathbf{a}}
\newcommand{\nb}{\mathbf{b}}
\newcommand{\nc}{\mathbf{c}}
\newcommand{\inj}{\mathrm{inj}}
\newcommand{\ind}{\mathrm{ind}}
\newcommand{\Sym}{\mathrm{Sym}}
\newcommand{\Pd}{\mathrm{Pd}}
\newcommand{\du}{\mathrm{d}}

\title{\textbf{Complexity of Nondeterministic Graph Parameter Testing}}
\author{Marek Karpinski\thanks{Dept. of Computer Science and the Hausdroff Center for Mathematics, University of Bonn. Supported in part by DFG grants, the Hausdorff grant EXC59-1/2. E-mail: \textrm{marek@cs.uni-bonn.de}}
\and 
Roland Mark\'o\thanks{Hausdorff Center for Mathematics, University of Bonn. Supported in part by a Hausdorff scholarship. E-mail: \textrm{roland.marko@hcm.uni-bonn.de}}}

\date{}
\maketitle
\begin{abstract}

We study the sample complexity of nondeterministically testable graph parameters and improve existing bounds on it by several orders of magnitude. The technique used would be also of independent interest. We also discuss the special case of weak nondeterministic testing for uniform hypergraphs of arbitrary order.

%{\normalsize A graph parameter $f$ is testable, when it is estimable via uniform node sampling, and it is non-deterministically testable if there exists a testable edge-colored digraph parameter $g$ that certifies the original ones values via $f(G)=\max g({\bf G})$ with optimizing over all colorings of $G$. The two properties are known to be equivalent, we prove the first quantitative non-tower-like dependence between the sample complexities $q_f$ and $q_g$ via the regularity method and graph limits, in particular we show .  }
\end{abstract}

\section{Introduction}

In this paper we investigate the estimation of graph parameters by means of uniform vertex sampling. We focus on the novel concept of nondeterministic graph parameter and property testing. 

We call a non-negative function on the set of labeled simple graphs a \emph{graph parameter} if it is invariant under graph isomorphism, i.e. relabeling of vertices.  %Analogously we define the notion of \emph{graphon parameter}, which is a function on the set of graphons (i.e., measurable $[0,1]$-valued functions on the unit square that are symmetric) invariant under graphon isomorphism ($U$ and $V$ are two graphons that are equivalent, if there are $\phi$ and $\xi$ are measure-preserving maps form $[0,1]$ to $[0,1]$ such that $U(\phi(x),\phi(y))$ and $V(\xi(x),\xi(y))$ are equal up to a set of measure zero). Notice that it is possible that we cannot extend a graph parameter to a graphon parameter, as equitable blow-ups of the same graph take different values, although they are equivalent as graphons. 
We define parameters of edge-$k$-colored directed graphs, which will be considered in this paper as loop-free and complete in the sense that each directed edge is present and carries exactly one color, and introduce analogously the concept also for graphons, the limit objects of dense graphs \cite{BCL}, \cite{LSzlim}. The central characteristic of parameters investigated in the current paper is whether it is the possible to estimate a given value of the parameter up to a desired accuracy via uniform sampling of bounded size that is independent from the size of the input graph, in short, is the parameter testable. If the answer to this question is positive, then we can ask for the smallest sample size that is sufficient for this purpose. For a graph $G$ (directed and $k$-colored possibly) the expression $\G(q,G)$ denotes the random induced subgraph of $G$ with the vertex set chosen uniformly among all subsets of $V(G)$ that have cardinality $q$.

\begin{definition}
The graph parameter $f$ is testable if for any $\varepsilon>0$ there exists a positive integer $q_0(\varepsilon)$ such that for any $q\geq q_0(\varepsilon)$ and simple graph $G$ with at least $q_0(\varepsilon)$ nodes 
\begin{align*}
\PP(|f(G)-f(\G(q,G))| >\varepsilon) < \varepsilon.
\end{align*}
The smallest function $q_0$ satisfying the previous inequality is called the sample complexity of $f$ and is denoted by $q_f$.
The testability of parameters of $k$-colored directed graphs and uniform hypergraphs is defined analogously.
\end{definition}

From now on colored means edge-colored if not noted otherwise and we generally assume that $q_f(\varepsilon)\geq \max\{1, 1/\varepsilon\}$. An a priori weaker notion than testability is the second cornerstone of the current work, it was introduced in \cite{LV}.

\begin{definition}\label{ch6:defndtest}
	The graph parameter $f$ is non-deterministically testable if there exist integers $k \geq m$ and a testable $k$-colored directed graph parameter $g$ called witness such that for any simple graph $G$ the value $f(G)=\max_{\GG} g(\GG)$ where the maximum goes over the set of $(k,m)$-colorings of $G$.  The edge-$k$-colored directed graph $\GG$ is a $(k,m)$-coloring of $G$, if after erasing all edges of $\GG$ colored with an element of $[m+1, \dots, k]$ and discarding the orientation, coloring, and multiplicity of the remaining edges we end up with $G$. We say in this case that $G$ is the shadow of $\GG$. 
\end{definition}
The corresponding definition for $r$-uniform hypergraphs (in short, $r$-graphs)  is analogous. The choice of maximizing over the $g$-values in \Cref{ch6:defndtest} is somewhat arbitrary, in a more general sense we could have $f(G)=g(\argmax_{\GG} L(g(\GG)))$ for any $\alpha$-H\"older continuous function $L$ from $\R$ to $\R$. Also, stronger formulations of being  a witness can be employed, such as permitting only undirected instances or imposing $k=2m$. 

\subsection{Previous work}
The problem regarding the relationship of the class of parameters that are testable and those who are non-deterministically testable was first studied in the framework of dense graph limits and property testing by Lov\'asz and Vesztergombi \cite{LV} in the spirit of the general ``P vs. NP'' question, that is a central problem in theoretical computer science. Using the particular  notion of nondeterminism above they were able to prove that any non-deterministically testable graph property is also testable, which implies the analogous statement for parameters. 
\begin{theorem}\cite{LV}
	Every non-deterministically testable graph property $\PPP$ is testable. The same holds in parameter testing.
\end{theorem}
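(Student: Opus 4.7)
The plan is to work in the graph limit framework. By the Lov\'asz--Szegedy characterization, a graph parameter is testable iff it admits an extension to the space $\WWW_0$ of graphons that is continuous in the cut distance $\delta_\square$, and the same equivalence holds for $k$-colored directed graph parameters on the corresponding space of colored digraphons. Let $\hat g$ denote the continuous extension of the witness $g$. I would introduce the candidate extension
\begin{align*}
\hat f(W) \defeq \sup_{\UU} \hat g(\UU),
\end{align*}
where the supremum runs over all $(k,m)$-coloring digraphons $\UU$ whose shadow equals $W$. The goal then splits into two parts: show that $\hat f$ coincides with $f$ on finite graphs, and show that $\hat f$ is continuous on $(\WWW_0, \delta_\square)$. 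Testability of $f$ then follows from the Lov\'asz--Szegedy equivalence applied in the reverse direction.

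The agreement on finite graphs uses compactness of the space of colored digraphons under $\delta_\square$ together with continuity of the shadow map: every $(k,m)$-coloring of a graph $G$ is trivially a coloring of the associated graphon $W_G$, and conversely any coloring digraphon shadow-equivalent to $W_G$ can be approximated in $\delta_\square$ by $(k,m)$-colorings of $G$ via a sampling-and-rounding procedure on the color densities. Combined with the continuity of $\hat g$, this yields $\hat f(W_G) = f(G)$ and also shows that the supremum defining $\hat f$ is attained.

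The core of the argument is continuity of $\hat f$ on $\WWW_0$. Assume $W_n \to W$ in $\delta_\square$. For upper semicontinuity one picks optimal colorings $\UU_n$ realising $\hat f(W_n)$; by compactness, a subsequence converges to some $\UU^\ast$, and continuity of the shadow map ensures that $\UU^\ast$ is a coloring of $W$, whence $\lim \hat f(W_n) = \hat g(\UU^\ast) \leq \hat f(W)$ along this subsequence. For lower semicontinuity, one picks an optimal coloring $\UU^\ast$ of $W$ and must construct $(k,m)$-colorings $\UU_n$ of $W_n$ with $\UU_n \to \UU^\ast$ in $\delta_\square$. The natural route is to transport the color densities of $\UU^\ast$ along a near-optimal measure-preserving coupling realising $\delta_\square(W_n, W) \to 0$, then perform a local correction on the small set where the pulled-back graphon differs from $W_n$ so that the result is an exact $(k,m)$-coloring of $W_n$. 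This lifting step is the main obstacle, since one must simultaneously keep the $\delta_\square$-distance to $\UU^\ast$ small and preserve the combinatorial constraint of being a coloring whose shadow is exactly $W_n$; once it is carried out, the two bounds combine to give continuity of $\hat f$, and hence testability of $f$, completing the reduction from the non-deterministic to the deterministic setting. The same scheme goes through for properties by applying it to indicator-like parameters (e.g.\ the cut-distance to the property).
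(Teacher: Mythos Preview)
Your overall framework is the non-effective graph-limit argument of Lov\'asz and Vesztergombi that the paper cites but does not reprove; the paper's own contribution is the quantitative \Cref{ch6:graphmain}, which proceeds by a different route via the cut-$\PPP$-norm regularity lemma (\Cref{ch3:sregk}) and the explicit lifting \Cref{ch6:lemma1}. So at the level of strategy you are reproducing the cited proof, not the paper's.

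There is, however, a genuine gap in your lower-semicontinuity step, and you flag it yourself as ``the main obstacle'' without actually resolving it. You propose to pull back $\UU^\ast$ along a near-optimal coupling and then ``perform a local correction on the small set where the pulled-back graphon differs from $W_n$''. But closeness in $\delta_\square$ does \emph{not} mean the two graphons agree off a set of small measure: take $W$ the constant $1/2$ and $W_n$ the graphon of a dense random graph --- they are $\delta_\square$-close yet differ almost everywhere. There is no small set on which to correct, and an $L^1$-style patching argument cannot work here.

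The mechanism that actually drives the \cite{LV} proof is multiplicative rather than local: after alignment one sets, on the edge side, $V_n^{(\alpha,\beta)} = W_n\, U^{\ast(\alpha,\beta)}/W$ (with the obvious convention where $W=0$), and then invokes the non-trivial \textbf{Fact} singled out in the paper's introduction, namely that $\|W_n - W\|_\square \to 0$ forces $\|(W_n - W)Z\|_\square \to 0$ for every fixed bounded measurable $Z$. This is precisely what makes $\VV_n \to \UU^\ast$ in cut norm, and it is also precisely where effectiveness is lost, since the rate depends on the structure of $Z = U^{\ast(\alpha,\beta)}/W$. The paper's quantitative approach sidesteps this by first regularising the optimal coloring to a step function with a bounded number of steps $\PPP$ and then demanding the stronger cut-$\PPP$-norm closeness, under which the same multiplicative lifting (\Cref{ch6:lemma1}) yields a uniform bound.
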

However, no explicit relationship  was provided between on one hand, the sample size required for estimating the $f$ value, and on the other, the two factors, the number of colors $k$ and $m$, and the sample complexity of the witness $g$. The reason for the non-efficient characteristic of the result is that the authors exploited various consequences of the next remarkable fact.

Graphons are bounded symmetric measurable functions on the unit square, their cut norm $\|.\|_\square$ given in \Cref{ch3:defnorm} below. At this point we wish to stress that it is weaker than the $L^1$-norm, and the $\delta_\square$-distance induced by it combined with an optimal overlay has a compact unit ball, this is not the case for the $\delta_1$-distance generated analogously by the $L^1$-norm. 

\textbf{Fact.}  If $(W_n)_{n \geq 1}$ is a sequence of graphons and $\|W_n\|_\square \to 0$ when $n$ tends to infinity, then for any measurable function $Z\colon [0,1]^2 \to [-1,1]$ it is true that $\|W_nZ\|_\square \to 0$, where the product is taken point-wise.

Although the above statement is true for all $Z$, the convergence is not uniform and its rate depends heavily on the structure of $Z$.

The relationship of the magnitude of the sample complexity of a nondeterministically testable property $\PPP$ and its colored witness $\QQ$  was analyzed by \citet{GS} relying on Szemer\'edi's Regularity Lemma and its connections to graph property testing unveiled by \citet*{AFNS}. In the upper bound given in \cite{GS} the height of the exponential tower was not bounded and growing as a function of the inverse of the accuracy, $1/\varepsilon$, the main result of \cite{GS} for parameters can be rephrased as follows.
\begin{theorem}\cite{GS}
	Every non-deterministically testable graph parameter $f$ is testable. If the sample complexity of the witness parameter $g$ for each $\varepsilon >0$ is $q_g(\varepsilon)$, then the sample complexity of $f$ for each $\varepsilon>0$ is at most $\tf(c q_g(\varepsilon/2))$ for some universal constant $c>0$, where $\tf(t)$ is the exponential tower of twos of height $t$.
\end{theorem}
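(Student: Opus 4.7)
I would split the task into a two-sided comparison between $f(\GG(q,G))$ and $f(G)=\max_{\GG} g(\GG)$, where the maximum ranges over $(k,m)$-colorings of $G$, taking $q=\tf(cq_g(\varepsilon/2))$. The direction $f(\GG(q,G)) \geq f(G)-\varepsilon/2$ is immediate: fix a $(k,m)$-coloring $\GG^{*}$ of $G$ with $g(\GG^{*})=f(G)$, and note that restricting $\GG^{*}$ to the random vertex sample yields a $(k,m)$-coloring of $\GG(q,G)$. Since $q\ge q_g(\varepsilon/2)$, testability of the witness parameter $g$ gives $|g(\text{induced sample})-g(\GG^{*})|<\varepsilon/2$ with probability at least $1-\varepsilon/2$, whence $f(\GG(q,G))\ge g(\text{induced sample})\ge f(G)-\varepsilon/2$.

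The reverse direction $f(\GG(q,G)) \leq f(G)+\varepsilon/2$ is the heart of the matter: one must lift an (almost) optimal coloring $\mathbf{H}^{*}$ of the sample $H=\GG(q,G)$ to a coloring of $G$ with nearly the same $g$-value. I would first apply a strong form of Szemer\'edi's regularity lemma to $G$ with an accuracy parameter $\eta$ calibrated to $q_g(\varepsilon/2)$, obtaining an equitable partition $V(G)=V_1\sqcup\cdots\sqcup V_T$ with $T=\tf(cq_g(\varepsilon/2))$ parts. When $q$ dominates $T$ together with the sampling error, a concentration argument shows that $H$ inherits the regular structure of $G$ relative to this partition, so each pair $(V_i,V_j)$ contributes a well-defined local color profile from $\mathbf{H}^{*}$. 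Then one builds a $(k,m)$-coloring $\GG$ of $G$ whose restriction to each bipartite piece $V_i\times V_j$ uses the very same color-density profile read off from $\mathbf{H}^{*}$, while respecting the shadow-edge pattern of $G$ on that piece; regularity of the shadow makes this compatible locally.

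The main obstacle is bounding $|g(\GG)-g(\mathbf{H}^{*})|\le\varepsilon/2$ after the lift, which I would attack using the fact that, since $g$ itself is testable with sample complexity $q_g(\varepsilon/2)$, its value depends only on the distribution of colored $q_g$-subsamples. By a counting-lemma argument on the regular partition, the distribution of colored $q_g$-subsamples of $\GG$ is close in total variation to that of $\mathbf{H}^{*}$, provided $\eta$ is chosen inverse-exponentially in $q_g(\varepsilon/2)$. This single choice is precisely what forces $T$ to the top of an exponential tower of height of order $q_g(\varepsilon/2)$: the counting-lemma accuracy needed inside a $q_g$-vertex subgraph is exponential in $q_g$, and Szemer\'edi's lemma compounds this one level further. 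Combining the two directions via a union bound over the failure events (sample regularity, and testability of $g$ applied to each of the two colorings) at level $\varepsilon/3$ closes the argument with sample complexity $\tf(cq_g(\varepsilon/2))$.
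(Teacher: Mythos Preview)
This theorem is quoted in the paper as the main result of \cite{GS}; the paper does not supply its own proof of it, so there is nothing to compare your sketch against directly. The paper's contribution is the sharper bound of \Cref{ch6:graphmain}.

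Your easy direction (restrict an optimal witness coloring of $G$ to the sample and invoke testability of $g$) coincides verbatim with the lower-bound half of the paper's proof of \Cref{ch6:graphmain}. For the hard direction your plan follows the Szemer\'edi-regularity route the paper attributes to \cite{GS}, but there is a quantitative slip in your accounting: if $\eta$ must be taken inverse-exponential in $q_g$ (as a total-variation bound over the $k^{O(q_g^2)}$ colored $q_g$-vertex graphs indeed forces, cf.\ the paper's own \Cref{ch6:dev}), then Szemer\'edi's lemma returns a partition whose number of parts is a tower of \emph{height} $\mathrm{poly}(1/\eta)=\exp(\Theta(q_g^2))$, not height $c\,q_g$; the phrase ``compounds this one level further'' badly understates what Szemer\'edi does to $1/\eta$. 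Either one has to argue that $\eta=1/\mathrm{poly}(q_g)$ already suffices (your total-variation step does not give this), or the bound actually produced is weaker than the stated $\tf(c\,q_g(\varepsilon/2))$. For contrast, the paper's proof of \Cref{ch6:graphmain} bypasses Szemer\'edi entirely: it regularizes the colored \emph{sample} via the cut-$\PPP$-norm lemma \Cref{ch3:sregk}, conditions on the uniform sampling event of \Cref{ch6:lemma2}, and lifts the coloring to $G$ through \Cref{ch6:lemma1}, which is exactly what replaces the unbounded tower by a triple exponential.
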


\subsection{Our contribution}
In the current paper
%, motivated by the fact that most concepts of the dense graph limit theory do only rely on the Weak Regularity Lemma as a central tool, see \cite{BCL}, \cite{BCL2}, 
we improve on the result of \cite{GS} by using a weaker type of regularity approach which eliminates the tower-type dependence on the sample complexity of the witness parameter. The function $\exp^{(t)}$ stands for the $t$-fold iteration of the exponential function ($\exp^{(0)}=\mathrm{id}$). Our main result is the following. 
\begin{theorem}\label{ch6:graphmain}
	Let $f$ be a nondeterministically testable simple graph parameter with witness parameter $g$ of $k$-colored digraphs, and let the corresponding sample complexity be $q_g$. Then $f$ is testable with sample complexity $q_f$, and  there exists a constant $c>0$ only depending on $k$ but not on $f$ or $g$ such that for any $\varepsilon>0$ the inequality $q_f(\varepsilon) \leq \exp^{(3)}(cq^2_g(\varepsilon/2))$ holds.
\end{theorem}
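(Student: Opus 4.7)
The plan is to establish the two one-sided estimates $f(\G(q,G)) \geq f(G) - \varepsilon/2$ and $f(\G(q,G)) \leq f(G) + \varepsilon/2$, each failing with probability at most $\varepsilon/2$, and to combine them by a union bound. The lower bound is the easy direction: fix a $(k,m)$-coloring $\GG^*$ of $G$ with $g(\GG^*)$ within $\varepsilon/8$ of $f(G)$; the induced colored subgraph $\G(q,\GG^*)$ is itself a $(k,m)$-coloring of $\G(q,G)$, so taking $q \geq q_g(\varepsilon/2)$ and invoking the testability of $g$ gives $f(\G(q,G)) \geq g(\G(q,\GG^*)) \geq g(\GG^*) - \varepsilon/2 \geq f(G) - \varepsilon/2$ with probability at least $1-\varepsilon/2$.

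The upper bound carries the full weight of the argument. The task is to take an almost-optimal $(k,m)$-coloring $\HHH^*$ of $\G(q,G)$ and manufacture a $(k,m)$-coloring $\GG'$ of $G$ with $g(\GG') \geq g(\HHH^*) - \varepsilon/2$. I would first compress $\HHH^*$ to a bounded-complexity template using the Frieze--Kannan weak regularity lemma adapted to $k$-colored digraphs, obtaining a partition of $V(\G(q,G))$ into $K = 2^{O(k^2/\eta^2)}$ classes whose associated stepwise reduction $T(\HHH^*)$ is within $\eta$ of $\HHH^*$ in cut norm and is determined by the $O(kK^2)$ inter-class color densities. Next I would lift this partition to $V(G)$ by assigning each unsampled vertex to the class whose sampled representatives most closely match its edge-density profile in $G$; a Chernoff bound together with a union over the (finitely many) possible templates shows that, provided $q$ is polynomial in $K/\eta$, the densities of $G$ between the lifted classes agree with those of the sample up to $\eta$. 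Finally, I would color the edges of $G$ class-pair by class-pair to realize the densities dictated by $T(\HHH^*)$, allocating colors in $[1,m]$ to edges and in $[m+1,k]$ to non-edges so that the shadow of $\GG'$ is precisely $G$, at the cost of an $O(k\eta)$ cut-norm perturbation.

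The analytic glue is the continuity of $g$ in the cut metric, where both the quoted Fact and the sample-complexity hypothesis enter. Testability of $g$ with complexity $q_g(\varepsilon/2)$ implies that $g$ is Lipschitz in $\delta_\square$ with constant of order $q_g^2(\varepsilon/2)$, since $g$ can be approximated up to $\varepsilon/2$ by a polynomial in the subgraph densities of $k$-colored digraphs on $q_g(\varepsilon/2)$ vertices and each such density is $O(q_g^2)$-Lipschitz in cut norm. The Fact is then invoked to keep $\|\GG' - T(\HHH^*)\|_\square$ small uniformly across colors: the residual is a sum of pointwise products between a graphon-level perturbation of cut-norm $O(\eta)$ and the color-class indicators coming from the weak regularity partition, and by the Fact each such product inherits the cut-norm smallness, with the quantitative rate governed by the stepwise complexity $K$ of the indicators. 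Applying the Lipschitz bound of $g$ then yields $|g(\GG') - g(\HHH^*)| \leq \varepsilon/2$.

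The main obstacle, and the source of the iterated-exponential dependence, is the compounding of the losses in the pipeline above. To control the Lipschitz loss we need $\eta \asymp \varepsilon / q_g^2(\varepsilon/2)$, which already forces the weak regularity partition size to be $K = 2^{O(q_g^4(\varepsilon/2))}$. The partition-transfer step must then succeed uniformly over a discretization of the $k^{K^2}$ possible templates, and since an empirical density deviates from its expectation by more than $\eta$ with probability of order $\exp(-q\eta^2)$ per class pair, the necessary union bound forces $q$ to be at least exponential in $K^2\log k$. Chaining the three exponentials together and absorbing all $k$-dependent factors into the constant $c$ gives $q_f(\varepsilon) \leq \exp^{(3)}(cq_g^2(\varepsilon/2))$, as claimed.
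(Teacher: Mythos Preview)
Your overall architecture---easy lower bound by restricting a near-optimal coloring, hard upper bound by regularizing the sample's optimal coloring and lifting it back to $G$---matches the paper's. However, two steps in your upper-bound pipeline do not go through as written.

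First, your Lipschitz claim is wrong. Testability with sample complexity $q=q_g(\varepsilon/2)$ does \emph{not} make $g$ Lipschitz in $\delta_\square$ with constant $O(q^2)$. The approximation $g(\GG)\approx\sum_{\FFF} t(\FFF,\GG)\,g(\FFF)$ sums over $k^{O(q^2)}$ labeled $k$-colored digraphs $\FFF$, and to conclude $|g(\GG)-g(\HHH)|\leq\varepsilon/2$ you need the distributions of $\G(q,\WW_\GG)$ and $\G(q,\WW_\HHH)$ to be close in total variation, which forces $\delta_\square(\GG,\HHH)\lesssim k^{-2q^2}$ (this is exactly the paper's Lemma~\ref{ch6:dev}). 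Hence your cut-norm tolerance $\eta$ must be \emph{exponentially} small in $q_g^2$, not $\varepsilon/q_g^2$. With the stated $\eta$ your arithmetic yields only $\exp^{(2)}(cq_g^4)$, a strictly better bound than the theorem claims, which should have been a red flag; with the correct $\eta$ the triple exponential reappears.

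Second, your invocation of ``the Fact'' is precisely what the paper is engineered to \emph{avoid}: the paper emphasizes that the Fact is non-uniform in $Z$ and that this is why the Lov\'asz--Vesztergombi argument gave no explicit bound. Your patch---that the rate is governed by the step complexity $K$ of the indicator---is morally right, but making it rigorous is exactly the content of the cut-$\PPP$-norm $\|\cdot\|_{\square\PPP}$ (Definition~\ref{ch3:cutpnorm}) and its regularity lemma (Lemma~\ref{ch3:sregk}), which controls $\|\WW-\WW_\PPP\|_{\square\QQ}$ \emph{uniformly over all} partitions $\QQ$ of bounded size. The paper then lifts colorings via Lemma~\ref{ch6:lemma1}, which genuinely requires closeness in $\|\cdot\|_{\square\PPP}$ rather than $\|\cdot\|_\square$, and transfers the partition from sample to $G$ via Lemma~\ref{ch6:lemma2}, proved through ground state energies (Theorem~\ref{ch4:maincor1}) rather than vertex-by-vertex density matching. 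Your ``assign each unsampled vertex to the class whose sampled representatives most closely match its edge-density profile'' is too vague to substitute for this: it is unclear what profile is meant, why it is stable under sampling, and how you obtain the simultaneous control over all $k^{O(K^2)}$ templates that your own union bound demands.
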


We also investigate the case where only node colored graphs and their parameters serve as nondeterministic certificate, and improve on the above upper bound in that setting. Moreover, we extend the method to be able to deal with hypergraphs of higher rank, see below for the precise formulation. 

\subsection{Outline of the paper}

This paper is organized as follows. In Section \ref{sec.not} we introduce the basic notation related to dense graph limit theory that is necessary to conduct the proof of the main result in Theorem \ref{ch6:graphmain}, and we will also state and prove the main ingredient of the proof, our intermediate regularity lemma, that might be of interest on its own right. Section \ref{sec.thm} continues with the proof of Theorem \ref{ch6:graphmain}, while in Section \ref{ch6:sec:weaknd} we treat a special case of the non-deterministic testing notion applied in the current paper.
%, and also directions of further research are discussed. 

\section{Graph limits and regularity lemmas}\label{sec.not}

First we provide the definition of graph convergence via subgraph densities. For the simple graphs $F$ and $G$ let $\hom(F,G)$ denote the number of maps $\phi \colon V(F) \to V(G)$ that preserve binary relationships, that is for each $u,v \in V(F)$ we have $\phi(u)\phi(v) \in E(G)$ if and only if when $uv \in E(F)$, in particular, a homomorphism has to be injective. In some previous works only the presence of edges had to be sustained by $\phi$. Furthermore, let $t(F,G)=\frac{\hom(F,G)}{|V(G)|^{|V(F)|}}$ denote the subgraph density of $F$ in $G$. The density $t(\FF,\GG)$ in the case of $k$-colored digraphs is defined analogously. %The variant $t_\inj(.,.)$ stands for the relative cardinality of injective graph homomorphisms. 

\begin{definition}\cite{LSzlim}\label{defconv}
Let $(G_n)_{n \geq 1}$ be a sequence of simple graphs. It is said to be convergent if for every simple graph $F$ the numerical sequences $(t(F,G_n))_{n \geq 1}$ converge to some limit. Convergence is defined in the case of sequences of $k$-colored digraphs analogously.
\end{definition} 

We now describe the space of limit objects of simple graphs in the sense of \citet{LSzlim}. Let $\mathcal W_0$ be the set of all bounded measurable functions $W \colon [0,1] \times [0,1] \to \R$, these objects are called \emph{kernels}. The subspace $\mathcal W$  of $[0,1]$-valued elements of $\mathcal W_0$ that are symmetric in the sense that $W(x,y)=W(y,x)$ for all $x,y \in [0,1]$ is the space of \emph{graphons}. 
The space of $k$-colored directed graphons can be described in a similar, though more involved way. Let $\mathcal W_0^{(k)}$ be the set of $k \times k$-tuples $\WW=(W^{(\alpha,\beta)})_{\alpha,\beta \in [k]}$ of kernels referred to as \emph{$k$-colored dikernels}. The subspace $\mathcal W^{(k)}$ of $\mathcal W_0^{(k)}$ whose components  obey a symmetry in the sense that $W^{(\alpha,\beta)}(x,y)=W^{(\beta,\alpha)}(y,x)$ for each $x,y \in [0,1]$, are non-negative, and additionally satisfy $\sum_{\alpha,\beta \in [k]} W^{(\alpha,\beta)} (x,y)=1$ for each $x,y \in [0,1]$ is referred to as the space of \emph{$k$-colored directed graphons}. Note that for each set $[m]$ with $m \leq k$ the function $W'(x,y)=\sum_{\alpha,\beta \in [m]} W^{(\alpha,\beta)} (x,y)$ is a graphon when $\WW$ is $k$-colored directed graphon, furthermore, each graphon $W$ can be regarded as a $2$-colored directed graphon by setting $W^{(1,2)}=W^{(2,1)}=0$ and $W^{(1,1)}=W$ everywhere.

A step function is a real-valued function on $[0,1]^2$ that is constant on the product sets $P_i \times Q_j$ for some pair of partitions $\PPP$ and $\QQ$ of $[0,1]$ into the same number of classes  referred to as steps, the step function is proper when $\PPP=\QQ$.  For a partition $\PPP$ the integer $t_\PPP$\sindex[symbols]{t@$t_\PPP$} denotes the number of its classes. We call a partition $\PPP$ of $[0,1]$ a \emph{canonical $n$-partition} if its classes are the intervals $P_i=[\frac{i-1}{n},\frac{i}{n})$ for each $i \in [n]$.  If the canonical $n$-partition refines a partition, then we speak of an $\mathcal{I}_n$-partition, and an $\mathcal{I}_n$-set is the union of some classes of the canonical $n$-partition. Further, a measure-preserving map from $[0,1]$ to $[0,1]$ is referred to as an $\mathcal{I}_n$-permutation if it corresponds to a permutation of the classes of the canonical $n$-partition. Functions on $[0,1]$ and $[0,1]^2$ are called $\mathcal{I}_n$-functions if they are constant on the classes of the canonical $n$-partition and products of those, respectively. We require these concepts to be able to relate graphs on different vertex sets to each other in a simple and computationally efficient way.

We can associate to each simple graph $G$ on $n$ vertices a graphon $W_G$ that is a step function with the steps forming the canonical $n$-partition and taking the value $1$ on $P_i \times P_j$ whenever $ij \in E(G)$ and $0$ otherwise. Similarly, for a $k$-colored directed $\GG$ we can define $\WW_\GG$ as the step function with the same steps as above and set $W_{\GG}^{(\alpha, \beta)}$ to $1$ on $P_i \times P_j$ when $(i,j)$ is colored by $\alpha$ and $(j,i)$ by $\beta$ in $\GG$, and to $0$ otherwise. Here on the diagonal cubes $P_i \times P_i$ we require the special color $\iota$ (standing for undefined color), the overall measure of the diagonal cubes is $O(1/|V(\GG)|)$.

Next we define the sampling process for the objects in consideration.

\begin{definition}\label{defsamp}
Let $q\geq 1$, $G$ be a simple graph and $S$ be a random subset of $V(G)$ chosen among all subsets of cardinality $q$ uniformly. Then $\G(q,G)$ denotes the random induced subgraph of $G$ on $S$. For a $k$-colored directed graph $\GG$ the random subgraph $\G(q, \GG)$ is defined analogously, same applies for $r$-uniform hypergraphs for arbitrary $r$. 

Let $W$ be a graphon and $q\geq 1$,  furthermore,  $(X_i)_{i \in [q]}$ and $(Y_{ij})_{ij \in {[q] \choose 2}}$ be mutually pair-wise independent uniform $[0,1]$ random variables. Then the random graph $\G(q,W)$ has vertex set $[q]$ and an edge runs between the vertices $i$ and $j$ if $Y_{ij}\geq W(X_i, X_j)$. The random $k$-colored directed graph $\G(q, \WW)$ has also vertex set $[q]$, and further, conditioned on the choice of $(X_i)_{i \in [q]}$, the colors for the edges in the two directions are chosen independently for all pairs $ij \in {[q] \choose 2}$ of vertices, the event that $(i,j)$ carries the color $\alpha$ and at the same time $(j,i)$ carries the color $\beta$ has probability $W^{(\alpha, \beta)}(X_i, X_j)$. \end{definition}

Note that in $\G(q,\WW)$ the colors of $(i,j)$ and $(j,i)$ are not even conditionally independent as random objects.

The density of a simple graph $F$ with vertex set $[q]$ in a graphon $W$ is defined as 
\begin{align*}
t(F,W)=\int_{[0,1]^q} \prod_{ij \in E(F)} W(x_i, x_j) \prod_{ij \notin E(F)} (1-W(x_i, x_j)) \du x,
\end{align*}
and the density of a colored digraph $\FF$ with the same  vertex set as above described as a matrix with color entries in $\WW$ is given as
\begin{align*}
t(\FF,\WW)=\int_{[0,1]^q} \prod_{\substack{ij: \FF(i,j)=\alpha \\
\FF(j,i)=\beta }} W^{(\alpha, \beta)}(x_i, x_j) \du x,
\end{align*}
where the product in the integral is taken over all unordered pairs $ij \in {[k] \choose 2}$.

The next theorem, first proven in \cite{LSzlim}, states that the graphons truly represent the limit space of simple graphs. For the proof of the more general cases, see \cite{DJ}, \cite{KM1}, and \cite{LSzcom}. 

\begin{theorem}\label{thmconv} \cite{LSzlim}, \cite{LSzcom}
If $(G_n)_{n \geq 1}$ is a convergent sequence of simple graphs, then there exists a graphon $W$ such that for every simple graph $F$ we have $t(F,G_n) \to t(F,W)$, when $n$ tends to infinity. Similarly, if $(\GG_n)_{n \geq 1}$ is a convergent sequence of $k$-colored directed graphs, then there exists a $k$-colored digraphon $\WW$ such that for every $k$-colored digraph $\FF$ it holds that $t(\FF,\GG_n) \to t(\FF,\WW)$.
\end{theorem}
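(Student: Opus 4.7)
The plan is to reduce the existence of the limit (di)graphon to two classical ingredients: a \emph{counting lemma} which makes subgraph densities continuous with respect to the cut distance, and a \emph{compactness theorem} for the space of graphons modulo measure-preserving rearrangement. Associate to each $G_n$ its step-graphon $W_{G_n}$. The counting lemma I would establish first says that for any simple graph $F$ on $q$ vertices, the map $W\mapsto t(F,W)$ is Lipschitz on $(\mathcal{W},\delta_\square)$ with constant depending only on $q$; this follows by the telescoping identity replacing one occurrence of $W$ at a time in the integral defining $t(F,\cdot)$ and invoking the definition of the cut norm to bound each increment.

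Next I would prove the compactness of $\mathcal{W}/\!\sim$ under $\delta_\square$, where $\sim$ identifies graphons agreeing up to a measure-preserving transformation of $[0,1]$. Apply the weak (Frieze--Kannan) regularity lemma to each $W_{G_n}$ to obtain, for every integer $\ell\ge 1$, a step-function approximation $U_n^{(\ell)}$ with a bounded number $N(\ell)$ of steps and $\|W_{G_n}-U_n^{(\ell)}\|_\square\le 1/\ell$. After a suitable rearrangement each $U_n^{(\ell)}$ is encoded by at most $N(\ell)^2$ real values in $[0,1]$, so a diagonal extraction over $\ell$ yields a subsequence $(n_k)$ along which $U_{n_k}^{(\ell)}\to U^{(\ell)}$ in $L^1$ simultaneously for every $\ell$. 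A Cauchy-in-cut-norm argument, using that refining a partition only decreases the cut-norm error, shows that $(U^{(\ell)})_\ell$ converges to a graphon $W$ with $W_{G_{n_k}}\to W$ in $\delta_\square$. Combined with the counting lemma this yields $t(F,G_{n_k})\to t(F,W)$ for every fixed $F$; since the original numerical sequence $t(F,G_n)$ was assumed convergent, its limit must coincide with $t(F,W)$, completing the simple-graph case.

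For $k$-colored digraphs I would carry out the same program componentwise. Apply a common refinement of the Frieze--Kannan partitions to all components $W_{\GG_n}^{(\alpha,\beta)}$ so that the $k^2$ kernels are simultaneously approximated by step functions on a shared partition. A counting lemma for colored densities, again proven by successive factor-replacement (now one factor per unordered pair $ij$, chosen from the $k^2$ entries according to $\FF$), delivers Lipschitz continuity of $\WW\mapsto t(\FF,\WW)$ in a suitably defined cut norm on $\mathcal{W}^{(k)}$. Diagonal extraction produces limit components $W^{(\alpha,\beta)}$. The defining properties of a $k$-colored digraphon---the symmetry $W^{(\alpha,\beta)}(x,y)=W^{(\beta,\alpha)}(y,x)$, non-negativity, and $\sum_{\alpha,\beta}W^{(\alpha,\beta)}\equiv 1$---are each $L^1$-closed, hence preserved under the subsequential limit; the diagonal squares of total measure $O(1/|V(\GG_n)|)$ vanish in the limit and need not be tracked.

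The principal obstacle is the compactness of $\mathcal{W}/\!\sim$ in $\delta_\square$: this is where the regularity lemma does the substantive work, through the combination of partition refinement, finite-dimensional extraction, and the Cauchy argument showing that the bounded-complexity approximations cohere into a single limit object. Once compactness is in place the counting lemma makes the reduction to convergence of densities essentially formal, and the extension to the colored directed setting amounts to routine bookkeeping for the $k^2$ components and the closed defining constraints of $\mathcal{W}^{(k)}$.
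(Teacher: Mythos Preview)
The paper does not prove this theorem; it is quoted as a known result from \cite{LSzlim} and \cite{LSzcom} (with generalizations in \cite{DJ}, \cite{KM1}) and used as a black box. Your sketch---counting lemma plus compactness of $(\mathcal{W}/{\sim},\delta_\square)$ via weak regularity and diagonal extraction---is essentially the classical Lov\'asz--Szegedy argument from those references, so there is nothing to compare against within this paper itself.
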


We proceed by enumerating the norms and distances that are relevant for the current work and are related to the graph limit theory and parameter testing. 

\begin{definition}\label{ch3:defnorm}
	The cut norm\sindex[notions]{cut norm!matrix} of a real $n \times n$ matrix $A$ is 
	\begin{align*}
	\|A\|_\square=\frac{1}{n^2} \max_{S,T \subset [n]} \left| A(S,T) \right|,
	\end{align*}\sindex[symbols]{a@$\Vert A \Vert_\square$,$\Vert W \Vert_\square$}
	where $A(S,T)=\sum_{s \in S, t \in T} A(s,t)$.\sindex[symbols]{a@$A(S,T)$}
	
	The cut distance\sindex[notions]{cut distance} of two labeled simple graphs $F$ and $G$ on the same vertex set $[n]$ is 
	\begin{align*}
	d_\square(F,G)= \|A_F-A_G\|_\square,
	\end{align*}\sindex[symbols]{d@$d_\square(F,G)$}
	where $A_F$ and $A_G$ stand for the respective adjacency matrices.
	
	The cut norm of a kernel $W$ is 
	\begin{align}\label{ch3:cutdefeq}
	\|W\|_\square= \max_{S,T \subset [0,1]} \left|\int_{S \times T} W(x,y) \du x \du y \right|,
	\end{align}
	where maximum is taken over all pairs of measurable sets $S$ and $T$. We speak of the $n$-cut norm of kernels\sindex[notions]{cut norm!$n$-cut norm} when the maximum in (\ref{ch3:cutdefeq}) is only taken over pairs of $\mathcal{I}_n$-sets, it is denoted by $\|W\|^{\langle n \rangle}_\square$\sindex[symbols]{w@$\Vert W\Vert^{\langle n \rangle}_\square$}. The cut norm of a $k\times k$-tuple of kernels $\WW=(W^{(\alpha, \beta)})_{\alpha, \beta=1}^k$ is 
	\begin{align*}
	\|\WW\|_\square=\sum_{\alpha, \beta=1}^{k} \|W^{(\alpha, \beta)}\|_\square.
	\end{align*}
	
	The cut distance of two graphons $W$ and $U$ is
	\begin{align*}
	\delta_\square(W, U) = \inf_{\phi, \psi} \|W^\phi-U^\psi\|_\square,
	\end{align*}\sindex[symbols]{d@$\delta_\square(U, W)$,$\delta_\square(G,H)$}
	where the infimum runs over all pairs of measure-preserving map from $[0,1]$ to $[0,1]$, and the graphon $W^\phi$ is defined as $W^\phi(x,y)= W(\phi(x),\phi(y))$.
	Similarly, for $k$-colored directed graphons $\WW$ and $\UU$ we have
	\begin{align*}
	\delta_\square(\WW, \UU) = \inf_{\phi, \psi} \|\WW^\phi-\UU^\psi\|_\square,
	\end{align*}
	with the difference being component-wise.
	The cut distance for arbitrary unlabeled graphs\sindex[notions]{cut distance!graph} $F$ and $G$ is
	\begin{align*}
	\delta_\square(F,G)=\delta_\square(W_F,W_G),
	\end{align*}
	the definitions for the colored directed version is identical.
	Another variant is for the case when $V(F)=[m]$ and $V(G)=[n]$ such that $m$ is a divisor of $n$. Then 
	\begin{align*}
	\hat \delta_\square^{\langle n \rangle} (F,G)=\min_\phi d_\square (F[n/m],G^\phi),
	\end{align*}
	where $F[t]$ is the $t$-fold equitable blow up of $F$ and the minimum goes over all node relabellings $\phi$ of $G$. In the case $n=m$ we omit the upper index and use $\hat \delta_\square$.
\end{definition} 

In fact, $\hat \delta_\square$ and $\delta_\square$ define only pseudometrics, graphs have distance zero whenever they have equitable blow-ups that are isomorphic.  For graphons we introduce the term graphon equivalence\sindex[notions]{graphon!equivalence} for the case whenever the $\delta_\square$ distance is $0$, but will refer to the above with a slight abuse of notation as proper distances.

Observe that for two graphs $F$ and $G$ on the common node set $[n]$ the distance $d_\square(F,G)=\|W_F-W_G\|_\square=\|W_F-W_G\|^{\langle n \rangle}_\square$. Also note that in general for $F$ and $G$ with identical vertex cardinalities $\delta_\square(F,G)$ is not necessarily equal to $\delta^{\langle n \rangle}_\square(F,G)$, however in \cite{BCL} it was demonstrated that $ \delta_\square(F,G)\leq \hat \delta_\square(F,G)\leq 32(\delta_\square(F,G))^{1/67}$.

An important property of the distances introduced above is that subgraph densities are uniformly continuous in the topology defined by them.

\begin{lemma}\cite{BCL}\sindex[notions]{Counting Lemma}
	\label{ch3:countlemma}
	Let $U$ and $W$ be two graphons. Then for every simple graph $F$ on $q$ vertices we have
	\begin{align*}
	|t(F,W)-t(F,U)| \leq {q \choose 2} \delta_{\square}(U,W). 
	\end{align*}
	The analogous result holds for $k$-colored digraphons.
\end{lemma}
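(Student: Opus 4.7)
The plan is to proceed in two stages: first reduce the claim to its cut-norm analogue, then establish the cut-norm version by telescoping over the edges of $F$. The reduction is immediate because the integrand defining $t(F,W)$ depends on $W$ only through its values at uniformly sampled points of $[0,1]$, so $t(F,W^\phi) = t(F,W)$ for every measure-preserving $\phi \colon [0,1] \to [0,1]$. Consequently, once I show $|t(F,W) - t(F,U)| \leq {q \choose 2} \|W-U\|_\square$ for arbitrary graphons $W, U$, I can replace $W, U$ by $W^\phi, U^\psi$ and take the infimum over measure-preserving pairs to obtain the claim with $\delta_\square$.

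For the cut-norm bound, I would enumerate $E(F) = \{e_1, \ldots, e_m\}$ with $m \leq {q \choose 2}$, and for $0 \leq i \leq m$ define the hybrid integrand
\begin{align*}
H_i(x_1, \ldots, x_q) = \prod_{j \leq i} W(x_{e_j}) \prod_{j > i} U(x_{e_j}),
\end{align*}
so that $\int H_m \, \du x = t(F,W)$ and $\int H_0 \, \du x = t(F,U)$. Telescoping then produces
\begin{align*}
t(F,W) - t(F,U) = \sum_{i=1}^{m} \int_{[0,1]^q} \bigl[W(x_{e_i}) - U(x_{e_i})\bigr] \, S_i(x) \, \du x,
\end{align*}
where $S_i(x) \in [0,1]$ is the product of the remaining graphon factors, and the task reduces to bounding each summand by $\|W-U\|_\square$.

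For the $i$-th term, writing $e_i = \{a,b\}$ and freezing all coordinates $x_c$ with $c \notin \{a,b\}$, the decisive observation will be that $S_i$ factorises as $\phi(x_a)\psi(x_b)\gamma$ with $\phi, \psi \colon [0,1] \to [0,1]$ and $\gamma \in [0,1]$: every remaining edge either meets exactly one of $a, b$ and then contributes a function of a single variable, or is disjoint from $\{a,b\}$ and contributes a constant. The inner integral then reduces to $\gamma \int (W-U)(x_a,x_b)\,\phi(x_a)\psi(x_b) \, \du x_a \, \du x_b$, and a layer-cake decomposition $\phi(x_a)\psi(x_b) = \int_0^1\int_0^1 \mathbf{1}_{\{\phi>s\} \times \{\psi>t\}}(x_a,x_b) \, \du s \, \du t$ expresses this as a convex combination of integrals $\int_{S \times T}(W-U)$, so the definition of $\|\cdot\|_\square$ bounds the modulus by $\|W-U\|_\square$. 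Re-integrating the frozen coordinates contributes a factor at most $1$, so summing over $m \leq {q \choose 2}$ edges yields the cut-norm inequality.

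The main obstacle will be recognising and exploiting this factorisation of $S_i$; without it, an integral of $W-U$ against an arbitrary function of two variables cannot be controlled by $\|W-U\|_\square$, and the telescoping would only yield a useless $L^1$-type bound. For the $k$-colored directed case I expect a routine adaptation: each ordered pair of vertices in $\FF$ selects a color pair $(\alpha, \beta)$ and hence a specific component $W^{(\alpha, \beta)}$, so the same telescoping swaps one such component for $U^{(\alpha, \beta)}$ at a time, each step bounded by $\|W^{(\alpha,\beta)} - U^{(\alpha,\beta)}\|_\square \leq \|\WW-\UU\|_\square$, and summing over at most ${q \choose 2}$ edges gives the statement with $\delta_\square(\WW, \UU)$ after applying the same measure-preserving reduction as above.
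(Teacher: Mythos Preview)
The paper itself does not prove this lemma --- it is stated with a citation to \cite{BCL} and no argument is supplied --- so there is no in-paper proof to compare against. Your approach is the standard one and the main ideas (invariance under measure-preserving maps, telescoping, factorisation of the residual integrand) are correct, but there is a genuine gap in how you set up the telescoping.

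In this paper, $t(F,W)$ is defined as the \emph{induced} density,
\[
t(F,W)=\int_{[0,1]^q} \prod_{ij \in E(F)} W(x_i, x_j) \prod_{ij \notin E(F)} \bigl(1-W(x_i, x_j)\bigr)\, \du x,
\]
so the integrand carries a factor for every unordered pair, not just for edges. Your hybrids $H_i = \prod_{j \le i} W(x_{e_j}) \prod_{j>i} U(x_{e_j})$ omit the non-edge factors entirely, and consequently $\int H_m \, \du x \ne t(F,W)$ and $\int H_0 \, \du x \ne t(F,U)$; as written, the telescoping identity fails. The repair is straightforward: enumerate all ${q \choose 2}$ pairs $e_1, \ldots, e_{{q \choose 2}}$, let the $i$-th factor be $W(x_{e_i})$ or $1-W(x_{e_i})$ according to whether $e_i \in E(F)$, and swap $W$ for $U$ one pair at a time. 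A non-edge swap produces the difference $(1-W)-(1-U) = -(W-U)$, which has the same cut norm, and your factorisation $S_i = \phi(x_a)\psi(x_b)\gamma$ goes through unchanged because the extra non-edge factors are still $[0,1]$-valued and, after freezing the coordinates outside $\{a,b\}$, each depends on at most one of $x_a, x_b$. This corrected telescoping has exactly ${q \choose 2}$ terms and delivers the stated bound; your reduction from $\delta_\square$ to $\|\cdot\|_\square$ and the adaptation to $k$-colored digraphons are then fine as written.
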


The connection to graph limits is given in the next theorem from \cite{BCL}.

\begin{theorem} \label{ch3:thmdist} \cite{BCL}
	A graph sequence $(G_n)_{n \geq 1}$ (a $k$-colored directed graph sequence $(\GG_n)_{n \geq 1}$, respectively) is convergent if and only if it is Cauchy in the $\delta_\square$ metric.  
\end{theorem}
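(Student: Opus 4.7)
The forward implication is the easy direction. If $(G_n)_{n \geq 1}$ is Cauchy in $\delta_\square$, then for any fixed simple graph $F$ on $q$ vertices the Counting Lemma (\Cref{ch3:countlemma}) gives
\begin{align*}
|t(F,G_n)-t(F,G_m)| \leq \binom{q}{2}\,\delta_\square(G_n,G_m),
\end{align*}
so $(t(F,G_n))_{n\geq 1}$ is Cauchy in $\R$ and therefore convergent. Since this holds for every $F$, the sequence $(G_n)$ is convergent in the sense of \Cref{defconv}, and the identical calculation with the colored Counting Lemma handles the $k$-colored directed case.

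For the converse, my plan is to argue by contradiction using the compactness of the space of graphons modulo graphon equivalence in the $\delta_\square$ metric (the analogue of Bolzano--Weierstrass for the graphon space). Suppose $(G_n)$ is convergent in densities but not Cauchy in $\delta_\square$. Then one can extract subsequences $(G_{n_i})$ and $(G_{m_i})$ with $\delta_\square(G_{n_i},G_{m_i}) > \varepsilon$ for some fixed $\varepsilon > 0$. Invoking the compactness result, I pass to further subsequences along which $W_{G_{n_i}} \to W$ and $W_{G_{m_i}} \to U$ in $\delta_\square$, where $W,U$ are graphons. By the lower semicontinuity of $\delta_\square$ we have $\delta_\square(W,U) \geq \varepsilon$, while the Counting Lemma applied along each subsequence yields $t(F,G_{n_i}) \to t(F,W)$ and $t(F,G_{m_i}) \to t(F,U)$ for every finite simple $F$. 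The assumed density convergence of the full sequence forces $t(F,W) = t(F,U)$ for every $F$. An appeal to the inverse counting (or moment-determinacy) statement for graphons — that equality of all subgraph densities implies graphon equivalence, i.e. $\delta_\square(W,U) = 0$ — contradicts $\delta_\square(W,U) \geq \varepsilon$.

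The step I expect to be the main obstacle is the $\delta_\square$-compactness of the graphon space together with the inverse counting statement, since both are nontrivial consequences of a regularity lemma: one needs a Szemerédi-type partition to produce, from an arbitrary sequence of graphons, a convergent subsequence with a well-defined step-function limit up to measure-preserving bijection. The colored directed version of the theorem is handled by the entirely parallel argument, provided one replaces the scalar Counting Lemma, compactness, and inverse counting statements with their $k$-colored digraphon analogues, available in \cite{LSzcom}, \cite{DJ}, and \cite{KM1}. Once these ingredients are in place, the contradiction argument above transfers verbatim, with component-wise densities replacing scalar subgraph densities.
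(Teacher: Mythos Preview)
The paper does not give its own proof of \Cref{ch3:thmdist}; it is simply cited from \cite{BCL}. Your argument is correct, and the easy direction (Cauchy in $\delta_\square$ $\Rightarrow$ density-convergent) via the Counting Lemma is exactly the standard one.

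For the hard direction, your route differs from what the paper implicitly points to. Right after stating \Cref{ch3:samp} the paper remarks that ``originally, this result was established to verify \Cref{ch3:thmdist}.'' The \cite{BCL} argument combines the existence of a limit graphon $W$ (\Cref{thmconv}) with the sampling lemma: once $t(F,G_n)\to t(F,W)$ for all $F$ on $q$ vertices, the distributions of $\G(q,G_n)$ and $\G(q,W)$ are close in total variation, so they can be coupled to coincide with high probability; then \Cref{ch3:samp} applied to both $G_n$ and $W$ plus the triangle inequality gives $\delta_\square(G_n,W)\to 0$ directly. Your approach instead imports $\delta_\square$-compactness of the graphon space and the inverse counting (uniqueness) theorem and argues by contradiction. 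This is valid, but you should state explicitly that both ingredients are proved independently of \Cref{ch3:thmdist} (they are, via weak regularity and moment methods), since otherwise the argument looks circular. Relative to the tools actually stated in this paper (\Cref{thmconv}, \Cref{ch3:countlemma}, \Cref{ch3:samp}), the sampling route is more self-contained; your route is conceptually tidy but leans on heavier external machinery. One cosmetic point: what you call ``lower semicontinuity of $\delta_\square$'' is simply continuity of the pseudometric, obtained from the triangle inequality.
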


%Actually, we require a version that also contains a lower bound on $m$ and the condition that the $f_i$'s are linearly independent, these were not present in the original formulation, although the inclusion does not alter the proof of the original by a significant amount. In the case that for some $f$ with $\|f\|\leq 1$ Lemma \ref{hreg} outputs an $m$ below our desired bound, then we pick an arbitrary $f_{m+1}\in \KK_{m+1}$ and a $\gamma_{m+1}$ such that $\|f- \sum_{i=1}^{m+1} \gamma_i f_i\| \leq 1$, and apply the lemma once again for $f- \sum_{i=1}^{m+1} \gamma_i f_i$. Iterate this procedure until the desired lower bound is reached eventually. We phrase this result as a corollary.    %, although, on order to keep the upper bound as simple as possible we have to be satisfied with an upper bound that is larger by a constant factor. 
%
%\begin{corollary} \label{ch3:hregcor}
%Let $\KK_1, \KK_2, \dots$ be arbitrary subsets of a Hilbert space $\HH$. Then for every $\varepsilon>0$, $m_0=m_0(\varepsilon)\in \N$ and $f \in \HH$ with $\|f\|\leq 1$ there is an $m_0 \leq m \leq m_0 + \frac{1}{\varepsilon^2}$ and there are linearly independent $f_i \in \KK_i$ and $\gamma_i \in \R$ ($1 \leq i\leq m$) such that for every $g \in \KK_{m+1}$ we have that 
%\begin{align}
%|\langle g, f- \sum_{i=1}^m \gamma_i f_i \rangle| \leq \varepsilon \|f\| \|g\|.
%\end{align}   
%\end{corollary}
%The following result regarding the distance of a simple graph and its induced subgraph on a uniformly chosen vertex set is crucial also for our purposes. 
A remarkable feature of the $\delta_\square$ distance is that the deviation of a sampled graph from the original graph or graphon can be upper bounded by a function that decreases logarithmically in the inverse of the sample size. Originally, this result was established to verify \Cref{ch3:thmdist}.

\begin{lemma}\label{ch3:samp}\cite{BCL}
	Let $\varepsilon>0$ and let $U$ be a graphon. Then for $q \geq 2^{100/\varepsilon^2}$ we have
	\begin{align}
	\PP\left(\delta_\square(U, \G(q,U)) \geq \varepsilon\right) \leq \exp\left(-4^{100/\varepsilon^2} \frac{\varepsilon^2}{50}\right). 
	\end{align}
	
\end{lemma}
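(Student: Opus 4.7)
The plan is to follow the Borgs--Chayes--Lov\'asz sampling strategy, combining a coarse discretization of $U$ with two concentration-of-measure steps that bound, respectively, the deviation coming from the random sample points $X_1,\ldots,X_q$ in $[0,1]$ and the deviation coming from the Bernoulli edge variables $Y_{ij}$. The main purpose of the discretization is to reduce the cut-norm supremum (which ranges over all measurable subsets) to a supremum over a finite and manageable family of cell-aligned subsets, so that the usual union-bound-over-cuts machinery applies.

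First, I would approximate $U$ in cut norm by an $\mathcal{I}_k$-step function $U_k$ with $k = 2^{C/\varepsilon^2}$ equal cells so that $\|U-U_k\|_\square \leq \varepsilon/10$; the conditional expectation of $U$ on a weak-regularity partition of Frieze--Kannan type does the job. Second, given the $X_i$'s, I would show that the weighted empirical graphon $H$ on $[q]$ defined by $H(i,j) = U(X_i,X_j)$ is $\varepsilon/4$-close to $U_k$ in cut norm after a measure-preserving relabeling of $[0,1]$ that matches the empirical cell frequencies to uniform on $[k]$. For each of the $4^k$ cell-aligned pairs $(I,J)$, the quantity $\frac{1}{q^2}\sum_{i,j}\I[X_i\in I]\,\I[X_j\in J]\,U(X_i,X_j) - \int_{I\times J} U_k$ is a bounded-differences function of $X_1,\ldots,X_q$ with increments of order $1/q$, so McDiarmid gives a deviation probability $\exp(-c\varepsilon^2 q)$; the lower bound $q \geq 2^{100/\varepsilon^2} \gg k/\varepsilon^2$ then dominates the $4^k$ union bound. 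Third, conditional on the $X_i$'s, I would pass from $H$ to $\G(q,U)$ via the independent Bernoulli edge sampling; for each of the $4^q$ pairs $(S,T) \subseteq [q]^2$, the corresponding sum of $|S||T|$ conditionally independent centred Bernoulli variables concentrates at rate $\exp(-c'\varepsilon^2 q^2)$ by Chernoff--Hoeffding, and the union bound over pairs is comfortably absorbed since $q^2\varepsilon^2 \gg q\log 4$.

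The main obstacle is the second step: the delicate balance between the discretization error (small for large $k$), the $4^k$ union-bound cost (small for small $k$), and the McDiarmid rate $\exp(-c\varepsilon^2 q)$ (which in turn forces $q \gg k/\varepsilon^2$). The choice $k = 2^{C/\varepsilon^2}$ with $C$ small enough, together with the hypothesis $q \geq 2^{100/\varepsilon^2}$, is exactly what makes these constraints simultaneously satisfiable. Combining the three bounds and invoking the comparison $\delta_\square \leq \hat\delta_\square$ from \cite{BCL} to turn the cut-norm estimate for the fixed relabeling into the infimum over measure-preserving bijections of $[0,1]$ then yields the stated probability bound.
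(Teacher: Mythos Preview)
The paper does not prove this lemma; it is quoted from \cite{BCL} without proof. Your sketch is essentially the original Borgs--Chayes--Lov\'asz--S\'os--Vesztergombi argument: weak-regularize $U$ to a step function on $k=2^{C/\varepsilon^2}$ cells, use bounded differences plus a $4^k$ union bound to control the deviation of the empirical weighted graph $H(i,j)=U(X_i,X_j)$ from $U$, and then use Chernoff plus a $4^q$ union bound to pass from $H$ to the Bernoulli-sampled $\G(q,U)$. The balance you identify between the regularity error, the union-bound cost, and the McDiarmid rate is exactly the crux of the original proof, and your choice of parameters is compatible with it.

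One small technical caveat: in your third step the edge variables are not literally $|S||T|$ independent Bernoullis, since for undirected graphs the pair $(i,j)$ and $(j,i)$ share the same edge variable; the correct count is $\binom{q}{2}$ independent variables indexed by unordered pairs, with each contributing to the cut sum with multiplicity at most $2$. This does not affect the Hoeffding rate beyond a constant. Also, the relabeling step in your second paragraph deserves a word of care: one needs both that the empirical cell frequencies are close to $1/k$ (a further Chernoff/union-bound step over $k$ cells) and that the cut norm on the empirical side can be controlled by cell-aligned cuts up to an $L^1$ error coming from the frequency mismatch. These are routine once stated, and with them your outline is a valid proof.
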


We turn our attention to the continuous formulation of the Regularity Lemma of Frieze and Kannan \cite{FK} in the graphon space. For a partition $\PPP$ of $[0,1]$ and a kernel $W$  we obtain $W_\PPP$\sindex[symbols]{w@$W_\PPP$} from $W$ by averaging on every rectangle given by product sets from $\PPP$.

\begin{lemma}[Weak Regularity Lemma for kernels]\label{ch3:wreg} \cite{FK}, \cite{LSzreg} 
	For every $\varepsilon>0$ and $W \in \mathcal W_0$ there exists a partition $\PPP=(P_1, \dots, P_m)$ of $[0,1]$ into $m \leq 2^{\frac{8}{\varepsilon^2}}$ parts, such that 
	\begin{align}
	\| W-W_{\PPP} \|_\square \leq \varepsilon \|W\|_2.
	\end{align}
\end{lemma}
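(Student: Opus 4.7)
The plan is to run a Frieze--Kannan style energy increment argument directly in the continuous (kernel) setting. For a finite partition $\PPP$ of $[0,1]$, let the energy be defined as $e(\PPP)=\|W_\PPP\|_2^2$, where $W_\PPP$ is the conditional expectation (equivalently the $L^2$-projection) of $W$ onto the subspace of functions constant on each product cell $P_i\times P_j$. Since orthogonal projection is an $L^2$-contraction, $e(\PPP)\leq \|W\|_2^2$ for every $\PPP$, while the trivial partition $\PPP_0=\{[0,1]\}$ has $e(\PPP_0)\geq 0$.

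The key refinement step goes as follows. Assume $\|W-W_\PPP\|_\square>\varepsilon\|W\|_2$. By the definition of the cut norm in \eqref{ch3:cutdefeq}, there exist measurable sets $S,T\subseteq[0,1]$ with $\bigl|\int_{S\times T}(W-W_\PPP)\bigr|>\varepsilon\|W\|_2$. Let $\PPP'$ be the common refinement of $\PPP$ with $\{S,S^c\}$ and $\{T,T^c\}$; then $|\PPP'|\leq 4|\PPP|$, and both $S$ and $T$ are unions of cells of $\PPP'$. The difference $W_{\PPP'}-W_\PPP$ is a $\PPP'\times\PPP'$-step function orthogonal in $L^2$ to every $\PPP\times\PPP$-step function, so Pythagoras yields
\begin{align*}
e(\PPP')-e(\PPP)=\|W_{\PPP'}-W_\PPP\|_2^2.
\end{align*}
Because $S\times T$ is a union of cells of $\PPP'$, we have $\int_{S\times T}(W_{\PPP'}-W_\PPP)=\int_{S\times T}(W-W_\PPP)$, and Cauchy--Schwarz combined with $|S|,|T|\leq 1$ gives
\begin{align*}
\varepsilon\|W\|_2<\Bigl|\int_{S\times T}(W_{\PPP'}-W_\PPP)\Bigr|\leq \|W_{\PPP'}-W_\PPP\|_2,
\end{align*}
so the energy strictly increases by more than $\varepsilon^2\|W\|_2^2$.

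Iterating this refinement starting from $\PPP_0$, the process must terminate in at most $\lceil\varepsilon^{-2}\rceil$ steps, since otherwise $e(\PPP)>\|W\|_2^2$, contradicting the upper bound. When it terminates, the resulting partition $\PPP$ satisfies $\|W-W_\PPP\|_\square\leq \varepsilon\|W\|_2$, and its number of parts is bounded by $4^{\lceil\varepsilon^{-2}\rceil}\leq 2^{8/\varepsilon^2}$ (the factor $4$ slack comfortably absorbs the ceiling and the $\log_2 4=2$ conversion).

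I expect no serious obstacle: the only minor point is that $S$ and $T$ are obtained from the supremum defining $\|\cdot\|_\square$, so strictly speaking one should pick any witnesses achieving a density exceeding, say, $\varepsilon\|W\|_2$ (a slightly weaker threshold than the supremum but still strictly greater than $\varepsilon\|W\|_2$), which exist directly by definition of the supremum. One could equivalently normalize by renaming $\varepsilon$ throughout. The orthogonality and projection identities are standard $L^2$ facts and do not rely on the boundedness of $W$, so the statement extends beyond $[0,1]$-valued functions to all of $\mathcal W_0$ as phrased.
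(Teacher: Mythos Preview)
Your argument is correct and is exactly the standard Frieze--Kannan energy increment proof. Note, however, that the paper does not supply its own proof of this lemma: it is stated with citations to \cite{FK} and \cite{LSzreg} and used as a black box, so there is nothing in the paper to compare against directly. That said, the paper does carry out essentially the same energy-increment mechanism (Pythagoras for $\|W_{\RRR_i}\|_2^2-\|W_{\RRR_{i-1}}\|_2^2$, then Cauchy--Schwarz against the witness step function, then the $\|W\|_2^2$ ceiling to bound the number of iterations) in its proof of \Cref{ch3:sregk}, so your approach is fully aligned with the techniques the paper relies on.
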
\sindex[notions]{Weak Regularity Lemma!$2$-kernel}

By slight adaptation of the original proof in  \cite{FK} of the above result we obtain the version for $k$-colored digraphons. 

\begin{lemma}[Weak Regularity Lemma for $k$-colored directed graphons]\label{ch3:wregk}
	For every $\varepsilon>0$  and $k$-colored digraphon $\WW$ there exists a partition $\PPP=(P_1, \dots, P_m)$ of $[0,1]$ into $m \leq 2^{k^4\frac{8}{\varepsilon^2}}=t'_k(\varepsilon)$  parts, such that 
	\begin{align}
	d_\square(\WW,\WW_\PPP)=\sum_{\alpha, \beta=1}^k \| W^{(\alpha, \beta)}-(W^{(\alpha, \beta)})_{\PPP} \|_\square \leq \varepsilon. 
	\end{align}
	When $\WW=\WW_\GG$ for a $k$-colored digraph $\GG$ with vertex cardinality $n$, then one can require in the above statement that $\PPP$ is an $\mathcal{I}_n$-partition.
\end{lemma}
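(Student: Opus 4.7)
The plan is to adapt the index-increment (energy-boosting) argument behind the original Frieze--Kannan weak regularity lemma (\Cref{ch3:wreg}) to the $k^2$-tuple of kernels making up $\WW$, while keeping a \emph{common} partition for all color components. Define the index $\mathrm{ind}(\PPP) = \sum_{\alpha,\beta \in [k]} \|W^{(\alpha,\beta)}_\PPP\|_2^2$, and note that because $\sum_{\alpha,\beta} W^{(\alpha,\beta)} \equiv 1$ pointwise and every entry lies in $[0,1]$, we have $(W^{(\alpha,\beta)})^2 \leq W^{(\alpha,\beta)}$, so $\mathrm{ind}(\PPP) \leq \sum_{\alpha,\beta} \|W^{(\alpha,\beta)}\|_1 = 1$ for every partition $\PPP$. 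This replaces the role of $\|W\|_2^2$ in the single-kernel case and will cap the number of refinement steps.

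Construct partitions $\PPP_0, \PPP_1, \dots$ greedily, starting with $\PPP_0 = \{[0,1]\}$. At stage $i$, if $d_\square(\WW, \WW_{\PPP_i}) > \varepsilon$, then by pigeonhole there exist color indices $\alpha_0, \beta_0$ and measurable sets $S_i, T_i \subseteq [0,1]$ with
\begin{align*}
\left| \int_{S_i \times T_i} \left( W^{(\alpha_0, \beta_0)} - W^{(\alpha_0,\beta_0)}_{\PPP_i} \right) \du x\, \du y \right| > \frac{\varepsilon}{k^2}.
\end{align*}
Let $\PPP_{i+1}$ be the common refinement of $\PPP_i$ with $\{S_i, S_i^c\}$ and $\{T_i, T_i^c\}$; then $|\PPP_{i+1}| \leq 4 |\PPP_i|$. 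Since $\mathbf{1}_{S_i \times T_i}$ is $\PPP_{i+1} \otimes \PPP_{i+1}$-measurable, projecting onto this sigma-field and applying Cauchy--Schwarz yields $\|W^{(\alpha_0,\beta_0)}_{\PPP_{i+1}} - W^{(\alpha_0,\beta_0)}_{\PPP_i}\|_2 \geq \varepsilon/k^2$, and the Pythagorean relation $\|W^{(\alpha_0,\beta_0)}_{\PPP_{i+1}}\|_2^2 = \|W^{(\alpha_0,\beta_0)}_{\PPP_i}\|_2^2 + \|W^{(\alpha_0,\beta_0)}_{\PPP_{i+1}} - W^{(\alpha_0,\beta_0)}_{\PPP_i}\|_2^2$ together with monotonicity of the other components in the index under refinement gives
\begin{align*}
\mathrm{ind}(\PPP_{i+1}) \geq \mathrm{ind}(\PPP_i) + \frac{\varepsilon^2}{k^4}.
\end{align*}

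Since $\mathrm{ind}(\PPP) \leq 1$, the process must terminate after at most $N \leq k^4/\varepsilon^2$ refinement steps, yielding a final partition of size at most $4^N = 2^{2k^4/\varepsilon^2}$, comfortably below the claimed bound $t'_k(\varepsilon) = 2^{8k^4/\varepsilon^2}$. For the $\mathcal{I}_n$-addendum when $\WW = \WW_\GG$, observe that each $W^{(\alpha,\beta)}_\GG$ is constant on products of classes of the canonical $n$-partition. The functional $(S,T) \mapsto \int_{S \times T} (W^{(\alpha,\beta)}_\GG - W^{(\alpha,\beta)}_{\PPP_i})$ is then multilinear in the quantities $|S \cap P_j|$ and $|T \cap P_\ell|$ (assuming $\PPP_i$ is itself an $\mathcal{I}_n$-partition, which holds inductively), so its extrema are attained at $\mathcal{I}_n$-sets; we may therefore always select $S_i, T_i$ to be $\mathcal{I}_n$-sets, keeping every $\PPP_i$ an $\mathcal{I}_n$-partition.

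The main obstacle is essentially bookkeeping rather than conceptual: making sure the pigeonholing over $k^2$ color pairs interacts correctly with the Pythagoras step (so that the $\varepsilon/k^2$ deficit in a \emph{single} component, rather than the whole sum, already forces an $\varepsilon^2/k^4$ energy gain), and confirming that the diagonal ``undefined color'' blocks $P_i \times P_i$ of $\WW_\GG$, being of total measure $O(1/|V(\GG)|)$, can be absorbed into the error without affecting the $\mathcal{I}_n$-structure. Both are minor; the argument otherwise runs in direct parallel to \Cref{ch3:wreg}.
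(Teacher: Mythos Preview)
Your proposal is correct and is exactly the ``slight adaptation of the original proof in \cite{FK}'' that the paper invokes without spelling out: the paper gives no proof of \Cref{ch3:wregk} beyond that one-line remark, and the energy-increment argument you describe---pigeonhole over the $k^2$ color pairs to find a component with cut defect $>\varepsilon/k^2$, refine by witnessing sets $S,T$, and use Pythagoras plus monotonicity of the remaining components to gain $\varepsilon^2/k^4$ in the index $\sum_{\alpha,\beta}\|W^{(\alpha,\beta)}_{\PPP}\|_2^2\leq 1$---is precisely that adaptation (and indeed is the same calculation the paper later writes out in detail for the stronger \Cref{ch3:sregk}). Your handling of the $\mathcal{I}_n$-case and the diagonal blocks is also fine.
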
\sindex[notions]{Weak Regularity Lemma!$(k,2)$-digraphon}

%We would like to elaborate on the last result: Appealing to \Cref{ch3:hreg} we set $\mathcal H$ to be the space of $k^2$-tuples of $L^2([0,1]^2)$ functions with the inner product being the sum of the component-wise $L^2$-products. Further, each $\KK_i$ consists of all $k^2$-tuples of indicator functions of the form $\I_{S \times T}(x,y)$. Then for an arbitrary $k$-colored digraphon $\WW$ we have $\|\WW\|_2\leq 1$ and for any element $\UU$ of $\KK_i$ we have $\|\UU\|_2\leq k$. It follows that there exists a $\VV$ that is a weighted sum of at most $\frac{4k^2}{\varepsilon^2}$ elements of $\KK_i$ and by this a proper step function with at most $2^{\frac{8k^4}{\varepsilon^2}}$ steps forming $\PPP$ such that $d_\square(\WW,\VV)\leq \varepsilon/2$. Since $\|.\|_\square$ is contractive with respect to averaging we have $d_\square(\WW,\WW_\PPP) \leq \varepsilon.$

The following kernel norm shares some useful properties with the cut-norm. Most prominently it admits a regularity lemma that outputs a partition whose number of classes is considerably below the tower-type magnitude in the desired accuracy. On the other hand, it does not admit a straight-forward definition of a related distance by calculating the norm of the difference of two optimally overlayed objects as in  \Cref{ch3:defnorm}. This is the result of the general assumption that the partition $\PPP$ below involved in the definition always belongs to one of the graphons whose deviation we wish to estimate. Therefore a relabeling of this graphon should also act on $\PPP$, hence symmetry fails. Its advantages in comparison to the cut norm will become clearer in the proof of the main result below. 
\begin{definition}\label{ch3:cutpnorm}
	Let $W$ be a kernel and $\PPP=(P_1, \dots, P_t)$ a partition of $[0,1]$. Then the cut-$\PPP$-norm\sindex[notions]{cut-$\PPP$-norm} of $W$ is 
	\begin{align}
	\|W\|_{\square\PPP}= \max_{S_i, T_i \subset P_i} \sum_{i,j=1}^t \left|\int_{S_i \times T_j} W(x,y) \du x \du y \right|.
	\end{align}\sindex[symbols]{w@$\Vert W\Vert_{\square\PPP}$,$\Vert H\Vert_{\square\PPP}$}
	For two kernels $U$ and $W$ let $d_{W,\PPP}(U)$\sindex[symbols]{d@$d_{W,\PPP}(U)$} denote the cut-$\PPP$-deviation\sindex[notions]{cut-$\PPP$-deviation} of $U$ with respect to $W$ that is defined by
	\begin{align}
	d_{W,\PPP}(U)= \inf_{\phi} \|U^\phi-W\|_{\square\PPP},
	\end{align}
	where the infimum runs over all measure preserving maps from $[0,1]$ to $[0,1]$.
	
	For $n\geq 1$, a partition $\PPP$ of $[n]$ and a directed weighted graph $H$ the cut-$\PPP$-norm of $H$ on $[n]$ is defined as   
	\begin{align}
	\|H\|_{\square\PPP}=\|W_H\|_{\square\PPP'},
	\end{align}
	where $\PPP'$ is the partition of $[0,1]$ induced by $\PPP$ and the map $j \mapsto [\frac{j-1}{n},\frac{j}{n})$.
\end{definition}

The definition for the $k$-colored version is analogous.

\begin{definition}\label{ch3:cutpnorm2}
	Let $\WW=(W^{(1,1)}, \dots, W^{(k,k)})$ be a $k\times k$ tuple of kernels and $\PPP=(P_1, \dots, P_t)$ a partition of $[0,1]$. Then the cut-$\PPP$-norm\sindex[notions]{cut-$\PPP$-norm!$(k,2)$-dikernel} of $\WW$ is 
	\begin{align}
	\|\WW\|_{\square\PPP}= \sum_{\alpha, \beta=1}^k \|W^{(\alpha, \beta)}\|_{\square\PPP}.
	\end{align}
	For two $k$-colored directed graphons $\UU$ and $\WW$ let $d_{\WW,\PPP}(\UU)$ denote the cut-$\PPP$-deviation\sindex[notions]{cut-$\PPP$-deviation!$(k,2)$-dikernel} of $\UU$ with respect to $\WW$ that is defined by
	\begin{align}
	d_{\WW,\PPP}(\UU)= \inf_{\phi} \|\UU^\phi-\WW\|_{\square\PPP}= \inf_{\phi} \sum_{\alpha, \beta=1}^k \|(U^{(\alpha, \beta)})^\phi-W^{(\alpha, \beta)}\|_{\square\PPP},
	\end{align}
	where the infimum runs over all measure preserving maps from $[0,1]$ to $[0,1]$. 
\end{definition}

It is not hard to check that the cut-$\PPP$-norm is in fact a norm on the space where we identify two kernels when they differ only on a set of measure $0$. From the definition it follows directly that for arbitrary kernels $U$ and $W$, and any partition $\PPP$ we have $\|W\|_\square \leq \|W\|_{\square\PPP} \leq \|W\|_1$ and $\delta_{\square}(U,W) \leq d_{W,\PPP}(U) \leq \delta_{1}(U,W)$, the same is true for the $k$-colored directed version.

\begin{remark} \label{ch3:rem1}
	We present a different description of the cut-$\PPP$-norm of $W$ and $\WW$ respectively that will allow us to rely on results concerning the cut-norm of \Cref{ch3:defnorm} more directly. For a partition $\PPP$ with $t$ classes and $A=(A_{j,l})_{j,l=1}^t \in \{-1,+1\}^{t \times t}$, let $W^A(x,y)=A_{j,l}W(x,y)$ ($\WW^A$ is given by $(W^{(\alpha,\beta)})^{A}(x,y)=A_{j,l} W^{(\alpha,\beta)}(x,y)$ respectively) for $x \in P_j$ and $y \in P_l$. Then $\|W\|_{\square\PPP}=\max\limits_A \|W^A\|_\square$ and $\|\WW\|_{\square\PPP}=\max\limits_A \|\WW^A\|_\square$.
\end{remark}

%\begin{lemma}[Intermediate regularity lemma]\label{ch3:sreg}
%For every $\varepsilon>0$ and $W \in \WWW_I$ there exists a partition $\PPP=(P_1, \dots, P_m)$ of $[0,1]$ into $m \leq 2^{3^{16/\varepsilon^2}}$ parts, such that 
%\begin{align}
%\|W-W_\PPP\|_{\square\PPP} \leq \varepsilon. 
%\end{align}
%\end{lemma}

This newly introduced norm admits a uniform approximation in the following sense that is essential to conduct the proof of \Cref{ch6:graphmain}. 

\begin{lemma}%[Intermediate Regularity Lemma for edge-$k$-colored graphons]
	\label{ch3:sregk}
	%Let $n \geq 1$ fixed, and let each partition in the statement be such that it is refined by the canonical partition of $[0,1]$ into $n$ parts, and each function  be such that it is constant on the product sets of classes of the canonical partition.
	
	For every $\varepsilon>0$, $m_0\colon[0,1] \to \N$, $k \geq 1$ and $k$-colored directed graphon $\WW=(W^{(\alpha, \beta)})_{\alpha, \beta \in [k]}$ there exists a partition $\PPP=(P_1, \dots, P_t)$ of $[0,1]$ into $t \leq\frac{(16m_0(\varepsilon))^{2^{\frac{k^4}{\varepsilon^2}}}}{4}=t_k(\varepsilon, m_0(\varepsilon))$ parts, such that for any partition $\QQ$ of $[0,1]$ into at most $\max\{t, m_0(\varepsilon)\}$ classes we have
	\begin{align}
	\|\WW-\WW_\PPP\|_{\square\QQ} \leq \varepsilon. 
	\end{align}
	If $\WW=\WW_\GG$ for some $k$-colored $\GG$ with $|V(\GG)|=n$, then one can require that $\PPP$ is an $\mathcal{I}_n$-partition. 
	If we want the parts to have equal measure (almost equal in the graph case), then the upper bound on the number of classes is modified to $\frac{((3k)^{12}m_0(\varepsilon)/\varepsilon^4)^{2^{2k^4/\varepsilon^2}}}{(3k)^6/\varepsilon^2}$.
\end{lemma}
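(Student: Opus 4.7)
The strategy is an energy-increment argument in the style of the Frieze--Kannan regularity lemma, but with the cut-$\PPP$-norm playing the role of the cut norm. I initialize with the trivial partition $\PPP_0 = \{[0,1]\}$ and produce a sequence $\PPP_0, \PPP_1, \ldots$ of refinements of increasing $L^2$-energy $e(\PPP) := \sum_{\alpha,\beta \in [k]} \|(W^{(\alpha,\beta)})_{\PPP}\|_2^2$. At step $i$, if the conclusion already fails, then some partition $\QQ$ with $|\QQ| \leq \max\{|\PPP_i|, m_0(\varepsilon)\}$ satisfies $\|\WW - \WW_{\PPP_i}\|_{\square\QQ} > \varepsilon$; by pigeonhole over the $k^2$ summands, some color pair $(\alpha_0,\beta_0)$ contributes more than $\varepsilon/k^2$. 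Applying \Cref{ch3:rem1} to $g := W^{(\alpha_0,\beta_0)} - (W^{(\alpha_0,\beta_0)})_{\PPP_i}$ yields a sign matrix $A \in \{-1,+1\}^{|\QQ| \times |\QQ|}$ and sets $S,T \subset [0,1]$ for which $\big|\sum_{j,l} A_{j,l} \int_{(S\cap Q_j)\times (T\cap Q_l)} g\big| > \varepsilon/k^2$.

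Let $\PPP_{i+1}$ be the common refinement of $\PPP_i$ with the four-way partition $\{S\cap T \cap Q_j,\ S\setminus T \cap Q_j,\ T\setminus S \cap Q_j,\ Q_j\setminus (S\cup T)\}_{j}$, so that $|\PPP_{i+1}| \leq 4 |\PPP_i|\, |\QQ| \leq 4|\PPP_i|\max\{|\PPP_i|, m_0(\varepsilon)\}$. The $\{-1,0,+1\}$-valued function $f := \sum_{j,l} A_{j,l} \mathbf{1}_{S\cap Q_j}\otimes \mathbf{1}_{T\cap Q_l}$ is a $\PPP_{i+1}^2$-step function with $\|f\|_2^2 = |S|\,|T|\leq 1$ and $|\int fg| > \varepsilon/k^2$. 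Since $(W^{(\alpha_0,\beta_0)})_{\PPP_{i+1}}$ is the $L^2$-projection of $W^{(\alpha_0,\beta_0)}$ onto $\PPP_{i+1}^2$-step functions and $\PPP_{i+1}$ refines $\PPP_i$, orthogonality and Cauchy--Schwarz give
\begin{align*}
e(\PPP_{i+1}) - e(\PPP_i) \ \geq\ \|(W^{(\alpha_0,\beta_0)})_{\PPP_{i+1}} - (W^{(\alpha_0,\beta_0)})_{\PPP_i}\|_2^2 \ \geq\ \left(\frac{\int fg}{\|f\|_2}\right)^{2} \ >\ \frac{\varepsilon^2}{k^4}.
\end{align*}
Since each $W^{(\alpha,\beta)} \in [0,1]$ and $\sum_{\alpha,\beta} W^{(\alpha,\beta)} \equiv 1$, one has $e(\PPP) \leq \sum_{\alpha,\beta} \|W^{(\alpha,\beta)}\|_1 = 1$ uniformly in $\PPP$, so the procedure terminates within $I \leq k^4/\varepsilon^2$ rounds.

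Unrolling the recursion $t_{i+1} \leq 4 t_i \max\{t_i, m_0(\varepsilon)\}$ from $t_0 = 1$ yields $t_i \leq (16\, m_0(\varepsilon))^{2^{i-1}}/4$, which at $i = I$ is at most $t_k(\varepsilon, m_0(\varepsilon))$. For the graph case $\WW = \WW_\GG$, observe that $\WW_\GG$ is $\mathcal{I}_n^{2}$-constant, so the extremal sets $S, T$ and the partition $\QQ$ can be taken to be $\mathcal{I}_n$-sets at every iteration; consequently each $\PPP_i$ remains an $\mathcal{I}_n$-partition. The equitable variant is obtained by subdividing each class of $\PPP_i$ into pieces of roughly equal measure $\varepsilon^4/(3k)^{12}$ at every step, which costs an additional polynomial factor per iteration and essentially doubles the iteration count to absorb the energy perturbation introduced by the equalization; bookkeeping yields the final expression.

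The principal obstacle is the feedback between $|\PPP_i|$ and the permitted size $\max\{|\PPP_i|, m_0(\varepsilon)\}$ of the bad witness $\QQ$: once $t_i \geq m_0(\varepsilon)$ the recursion becomes $t_{i+1} \leq 4t_i^2$, producing the doubly exponential outer tower in the final bound. One must verify that each round still extracts a full $\varepsilon^2/k^4$ energy jump despite $|\QQ|$ growing with $|\PPP_i|$, which is exactly what the $L^2$-orthogonality of successive projections and the $\|f\|_2 \leq 1$ estimate guarantee, independently of $|\QQ|$.
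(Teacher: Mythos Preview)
Your proof is correct and follows essentially the same energy-increment argument as the paper: the same iterative refinement by $\QQ$ and the witnessing sets $S,T$, the same $\varepsilon^2/k^4$ energy gain via Cauchy--Schwarz and $L^2$-orthogonality of successive averages, and the identical recursion $t_{i+1}\leq 4t_i\max\{t_i,m_0\}$ yielding the stated bound. The only place the paper is more explicit is the equitable variant, where it proves a quantitative $L^1$-stability claim (bounding $\|W_\PPP - W_\Ss\|_1$ by $7\sum_i\lambda(P_i\triangle S_i)$) to control the energy loss from rounding class sizes; your one-sentence sketch captures the idea but omits this lemma.
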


\begin{proof}
	Fix an arbitrary function $m_0$, a $\varepsilon>0$, and a $\WW \in \WWW^{(k)}$. We construct a sequence of partitions $\RRR_0, \RRR_1, \dots, \RRR_m$ such that $\RRR_0=[0,1]$ and each $\RRR_{i+1}$ refines the preceding $\RRR_i$. The integer $m$ is a priori undefined.
	
	The construction is sequential in the sense that we assume that we have already constructed $\RRR_{0}, \RRR_1, \dots, \RRR_{i-1}$ before considering the $i$th step of the construction. 
	
	If for $i \geq 1$ there exists a partition $\QQ=(Q_1, \dots, Q_{t_{\QQ}})$ of $[0,1]$ into at most $\max\{t_{\RRR_{i-1}}, m_0(\varepsilon)\}$ classes such that 
	\begin{align}\label{ch3:eq00}
	\|\WW-\WW_{\RRR_{i-1}}\|_{\square\QQ} > \varepsilon, 
	\end{align}
	then we proceed to the construction of $\RRR_i$. In the case of $i=1$ we choose $\QQ$ to have exactly $m_0$ parts of positive measure, this can be achieved since for any refinement $\QQ'$ of $\QQ$ we have $\|\WW-\WW_{\RRR_{i-1}}\|_{\square\QQ'} \geq \|\WW-\WW_{\RRR_{i-1}}\|_{\square\QQ}$.
	The inequality (\ref{ch3:eq00}) implies that there are $\alpha_0, \beta_0 \in[k]$ and measurable subsets $S$ and $T$ of $[0,1]$ such that
	\begin{align}\label{ch3:eq01}
	\sum_{i,j=1}^{t_\QQ} \left|\int_{(S \cap Q_i) \times (T \cap Q_j)} W^{(\alpha_0, \beta_0)}(x,y) - W^{(\alpha_0, \beta_0)}_{\RRR_{i-1}}(x,y) \du x \du y \right| > \varepsilon/k^2.
	\end{align}
	In this case we define $\RRR_{i}$ to be the coarsest common refinement of $\RRR_{i-1}$, $\QQ$, and $\{S,T\}$ for some arbitrary choice of the latter partition and sets satisfying (\ref{ch3:eq01}).
	
	Set $S_j=S \cap Q_j $ and $T_j=T \cap Q_j$ for $j \in [t_\QQ]$ and $U=W^{(\alpha_0, \beta_0)} - W^{(\alpha_0, \beta_0)}_{\RRR_{i-1}}$, further, define the step function	$V=\sum_{j,l \in [t_\QQ]} \sgn(\int_{S_j \times T_l} U ) \I_{S_j \times T_l}$
	
	In this case 
	\begin{align}
	\|\WW_{\RRR_{i}}\|_2^2-\|\WW_{\RRR_{i-1}}\|_2^2 &=\sum_{\alpha, \beta=1}^{k} \langle W^{(\alpha, \beta)}_{\RRR_{i}}, W^{(\alpha, \beta)}_{\RRR_{i}}\rangle - \langle W^{(\alpha, \beta)}_{\RRR_{i-1}}, W^{(\alpha, \beta)}_{\RRR_{i-1}}\rangle \nonumber \\
	& = \sum_{\alpha, \beta=1}^{k} \langle W^{(\alpha, \beta)}_{\RRR_{i}}-W^{(\alpha, \beta)}_{\RRR_{i-1}}, W^{(\alpha, \beta)}_{\RRR_{i}}-W^{(\alpha, \beta)}_{\RRR_{i-1}}\rangle \label{ch3:eq02} \\
	&= 	\sum_{\alpha, \beta=1}^{k} \|W^{(\alpha, \beta)}_{\RRR_{i}}-W^{(\alpha, \beta)}_{\RRR_{i-1}}\|_2^2 \nonumber \\
	&\geq 	 \|W^{(\alpha_0, \beta_0)}_{\RRR_{i}}-W^{(\alpha_0, \beta_0)}_{\RRR_{i-1}}\|_2^2 \nonumber \\
	&\geq \frac{1}{\|V\|_2^2} |\langle W^{(\alpha_0, \beta_0)}_{\RRR_{i}}-W^{(\alpha_0, \beta_0)}_{\RRR_{i-1}}, V  \rangle|^2 \label{ch3:eq04}\\
	&= \frac{1}{\|V\|_2^2} |\langle W^{(\alpha_0, \beta_0)}-W^{(\alpha_0, \beta_0)}_{\RRR_{i-1}}, V  \rangle |^2  \label{ch3:eq05}\\
	&\geq |\langle W^{(\alpha_0, \beta_0)}-W^{(\alpha_0, \beta_0)}_{\RRR_{i-1}}, V  \rangle |^2 \label{ch3:eq06}\\
	&> \varepsilon^2/k^4. \label{ch3:eq07}
	\end{align}
	Here we used first  in (\ref{ch3:eq02}) that $\langle W^{(\alpha, \beta)}_{\RRR_{i-1}}, W^{(\alpha, \beta)}_{\RRR_{i}}\rangle=\langle W^{(\alpha, \beta)}_{\RRR_{i-1}}, W^{(\alpha, \beta)}_{\RRR_{i-1}}\rangle$, since $W^{(\alpha, \beta)}_{\RRR_{i-1}}$ is constant on  $\RRR_{i-1}$ rectangles, and the integral of the two functions is equal on these rectangles. In (\ref{ch3:eq04}) we used the Cauchy-Schwarz inequality, then in (\ref{ch3:eq05}) the fact that $\langle W^{(\alpha_0, \beta_0)}_{\RRR_{i}}, V  \rangle=\langle W^{(\alpha_0, \beta_0)}, V  \rangle$, that is true by $V$ being constant on $\RRR_i$ rectangles and $W^{(\alpha_0, \beta_0)}_{\RRR_{i}}$ and $W^{(\alpha_0, \beta_0)}$ having the same integral value on taken on $\RRR_i$ rectangles. We concluded the calculation in (\ref{ch3:eq06}) by $\|V\|_2 \leq 1$ and in (\ref{ch3:eq07}) using the condition (\ref{ch3:eq01}). 
	
	If for 
	some $i_0\geq 0$ we have 
	\begin{align}
	\|\WW-\WW_{\RRR_{i_0}}\|_{\square\QQ} \leq \varepsilon 
	\end{align}
	for every  partition $\QQ$ of $[0,1]$ into at most $\max\{t_{\RRR_{i_0}}, m_0(\varepsilon)\}$ classes, then we stop the process and set $\PPP=\RRR_{i_0}$  and $m=i_0$.
	
	We have $\|\WW_{\RRR_{j}}\|_2^2\leq \|\WW\|_2^2\leq \|\WW\|_1^2\leq 1$ for each $j\geq 0$, and at each non-terminating step we showed $\|\WW_{\RRR_{i}}\|_2^2-\|\WW_{\RRR_{i-1}}\|_2^2 > \varepsilon^2/k^4$. Therefore by 
	\begin{align*}
	\|\WW_{\RRR_{j}}\|_2^2 \geq \sum_{i=1}^j \|\WW_{\RRR_{i}}\|_2^2-\|\WW_{\RRR_{i-1}}\|_2^2,
	\end{align*}
	for each $j\geq 1$ we conclude that the process terminates definitely after a finite number of steps and $m\leq k^4/\varepsilon^2$. The partition $\PPP$ satisfies 	\begin{align*}
	\|\WW-\WW_\PPP\|_{\square\QQ} \leq \varepsilon
	\end{align*}
	for each $\QQ$ with $t_\QQ\leq \max\{t, m_0(\varepsilon)\}$ by the choice of $m$ and the construction of the partition sequence, we are left to verify the upper bound on $t_\PPP$ in the statement of the lemma.  
	
	We know that $t_{\RRR_0}=1$, and if the partition does not terminate before the first step, then we assume $m_0 \leq t_{\RRR_1}$. This lower bound does not affect generality, the partition $\QQ_0$ that certifies that $\RRR_0$ is not suitable for the choice of the partition $\PPP$ in the statement of the lemma is selected to have $t_{\QQ_0}=m_0$. For this particular choice of $\QQ_0$ and $\RRR_1$ we can reformulate the condition that the terminating partition $\RRR_m$ has to fulfill as 
	\begin{align*}
	\|\WW-\WW_{\RRR_{m}}\|_{\square\QQ} \leq \varepsilon
	\end{align*} 
	for every $\QQ$ partition of $[0,1]$ into at most $t_{\RRR_{m}}$ classes, since $t_{\RRR_i} \geq m_0$ for every $i \geq 1$. 
	
	We set $s(0)=1$, $s(1)=4m_0$, and further define $s(i+1)=4 s(i)^2$ for each $i \geq 1$. We claim that for each $i\geq 0$ we have $t_{\RRR_i}\leq s(i)$, this can be easily verified by induction, since at each step $\RRR_{i+1}$ is the coarsest common refinement of two partitions with $t_{\RRR_i}$ classes and two additional sets.
	
	Further, for each $i \geq 1$ we have now $\log 4s(i+1)=2 \log 4s(i)$, therefore $s(i)=\frac{(16m_0)^{2^{i-1}}}{4}$, and consequently $s(m)\leq\frac{(16m_0)^{2^{\frac{k^4}{\varepsilon^2}-1}}}{4}$.

	The case regarding $\WW=\WW_\GG$ for a $k$-colored directed graph $\GG$ of vertex cardinality $n$ follows completely identically, at each step of the construction of the partitions $\RRR_i$ the partition $\QQ_0$ and the sets $S$ and $T$ can be chosen to be an $\mathcal{I}_n$-partition and sets, respectively. Hence, $\PPP$ is an $\mathcal{I}_n$-partition, the upper bound on $t_\PPP$ is identical to the one in the general case.
	
	In a similar way we can achieve that $\PPP$ is an equiv-partition, or a $\mathcal{I}_n$-partition with classes of almost equal size in the graph case respectively. Fix $\varepsilon>0$ and a $\WW \in \WWW^{(k)}$. For this setup we define the partition sequence somewhat differently, in particular each element is an equiv-partition.  Set $\RRR_0=[0,1]$ and each $\RRR_{i+1}$  refines the preceding $\RRR_i$. 
	
	If for $i \geq 1$ there exists a partition $\QQ=(Q_1, \dots, Q_{t_{\QQ}})$ of $[0,1]$ into at most $\max\{t_{\RRR_{i-1}}, m_0\}$ classes such that 
	\begin{align}\label{ch3:eq001}
	\|\WW-\WW_{\RRR_{i-1}}\|_{\square\QQ} > \varepsilon, 
	\end{align}
	then we proceed to the construction of $\RRR_i$, otherwise we stop, as above, and set $\PPP=\RRR_{i-1}$, and $m=i-1$. Assume that we are facing the first case. Let $\RRR_{i-1}'$ be the coarsest common refinement of $\RRR_{i-1}$, $\QQ$, and $\{S,T\}$, where the sets $S$ and $T$ certify (\ref{ch3:eq001}) as above. Then 	$\|\WW_{\RRR'_{i-1}}\|_2^2-\|\WW_{\RRR_{i-1}}\|_2^2 > \varepsilon^2/k^4$. Let $\RRR''_{i-1}=(R^2_1, \dots, R^2_l)$ be the partition that is obtained from the classes of $\RRR'_{i-1}=(R^1_1, \dots, R^1_l)$, such that the measure of each of the classes of $\RRR''_{i-1}$ is an integer multiple of $\varepsilon^2/(14k^6t_{\RRR'_{i-1}})$ with $\lambda(R^1_i \triangle R^2_i) \leq \varepsilon^2/(14k^6t_{\RRR'_{i-1}})$ for each $i \in [l]$. (We disregard the technical difficulty of $1/\varepsilon^2$ not being an integer to facilitate readability.)
	
	\begin{claim}\label{ch3:claim4}
		For any kernel $W \colon [0,1]^2 \to \R$ and partitions  $\PPP=(P_1, \dots, P_t)$ and $\Ss=(S_1, \dots, S_l)$ we have
		\begin{align}
		\|W_\PPP-W_\Ss\|_1 \leq 7 \sum_{i=1}^l \lambda(P_i \triangle S_i).
		\end{align}
	\end{claim}
	To see this, let $T_i=P_i \cap S_i$, $N_i= P_i \setminus S_i$, and $M_i=S_i \setminus P_i$ for each $i \in [l]$, and let $\TTT_1=(T_1, \dots, T_l, N_1, \dots , N_l)$ and $\TTT_2=(T_1, \dots, T_l, M_1, \dots, M_l)$ be two partitions of $[0,1]$. Then
	\begin{align*}
	\|W_\PPP-W_{\TTT_1}\|_1 &\leq \sum_{i,j=1}^l \left|\,\int\limits_{T_i \times T_j}  \frac{\int_{P_i \times P_j}W}{\lambda(P_i) \lambda(P_j)} - \frac{\int_{T_i \times T_j}W}{\lambda(T_i) \lambda(T_j)} \right| + 2\sum_{i=1}^l \lambda(N_i) \\
	&\leq \sum_{i,j=1}^l \frac{1}{{\lambda(P_i) \lambda(P_j)\lambda(T_i) \lambda(T_j)}}\Bigg|\int\limits_{T_i \times T_j}  \Bigg[ \lambda(T_i) \lambda(T_j)\left( \, \int\limits_{N_i \times T_j}W+\int\limits_{T_i \times N_j}W+\int\limits_{N_i \times N_j}W \, \right) \\ &\qquad -  \left(\lambda(N_i) \lambda(T_j)+\lambda(T_i) \lambda(N_j)+\lambda(N_i) \lambda(N_j)\right)\int\limits_{T_i \times T_j}W \, \Bigg] \Bigg| + 2\sum_{i=1}^l \lambda(N_i) \\
	&\leq  \sum_{i,j=1}^l 2\|W\|_\infty \frac{ \lambda^2(T_i) \lambda^2(T_j)\left[\lambda(N_i) \lambda(T_j)+\lambda(T_i) \lambda(N_j)+\lambda(N_i) \lambda(N_j)\right]}{{\lambda(P_i) \lambda(P_j)\lambda(T_i) \lambda(T_j)}} \\ &\qquad + 2\sum_{i=1}^l \lambda(N_i) \\
	&\leq 2  \sum_{i,j=1}^l \lambda(N_i)\lambda(P_j) +\lambda(N_j)\lambda(P_i) + 2\sum_{i=1}^l \lambda(N_i) \\
	&= 6 \sum_{i=1}^l \lambda(N_i).
	\end{align*}
	Similarly, 
	\begin{align*}
	\|W_\Ss-W_{\TTT_2}\|_1 \leq 6 \sum_{i=1}^l \lambda(M_i), 
	\end{align*}
	and also
	\begin{align*}
	\|W_{\TTT_2}-W_{\TTT_1}\|_1 \leq 2 \sum_{i=1}^l \lambda(M_i), 
	\end{align*}
	which implies the claim.

	By \Cref{ch3:claim4} it follows that 
	\begin{align*}
	\left| \|\WW_{\RRR'_{i-1}}\|_2^2-\|\WW_{\RRR''_{i-1}}\|_2^2 \right| &=\left|\sum_{\alpha,\beta=1}^{k} \int_{[0,1]^2} (W^{(\alpha, \beta)}_{\RRR'_{i-1}})^2(x,y)-(W^{(\alpha, \beta)}_{\RRR''_{i-1}})^2(x,y) \du x\du y \right|\\
	&\leq \sum_{\alpha,\beta=1}^{k} \left|\int_{[0,1]^2} (W^{(\alpha, \beta)}_{\RRR'_{i-1}}(x,y)-W^{(\alpha, \beta)}_{\RRR''_{i-1}}(x,y))(W^{(\alpha, \beta)}_{\RRR'_{i-1}}(x,y)+W^{(\alpha, \beta)}_{\RRR''_{i-1}}(x,y)) \du x\du y\right| \\
	&\leq  \sum_2 \|\WW\|_\infty {\alpha,\beta=1}^{k} \| W^{(\alpha, \beta)}_{\RRR'_{i-1}}-W^{(\alpha, \beta)}_{\RRR''_{i-1}}\|_1 \\
	&\leq 28 k^2 \sum_{i=1}^l \lambda(R^1_i \triangle R^2_i) \\
	&\leq \varepsilon^2/(2k^4).
	\end{align*}
	We finish with the construction of $\RRR_i$ by refining $\RRR''_{i-1}$ into $\varepsilon^2/(28k^6t_{\RRR'_{i-1}})$ sets in total of equal measure so that the resulting partition refines $\RRR''_{i-1}$. It follows that
	\begin{align}
	\|\WW_{\RRR_{i}}\|_2^2-\|\WW_{\RRR''_{i-1}}\|_2^2 \geq 0,
	\end{align}
	hence
	\begin{align}
	\|\WW_{\RRR_{i}}\|_2^2-\|\WW_{\RRR_{i-1}}\|_2^2 \geq \varepsilon^2/(2k^4).
	\end{align}
	
	The construction of the partitions terminates after at most $2k^4/\varepsilon^2$ steps. The partition $\PPP$ satisfies the norm conditions of the lemma, we are left to check whether it has the right number of classes. Similarly as above, let $s(0)=1$ and $s(1)=m_0(3k)^6/\varepsilon^2$, and further for $i \geq 1$ let $s(i+1)=(3k)^6 s^2(i)/\varepsilon^2$. It is clear from the construction that $t_{\RRR_i}\leq s(i)$. Let $a=(3k)^6/\varepsilon^2$, then it is not difficult to see that $s(i)=\frac{(a^2m_0)^{2^{i-1}}}{a}$. It follows that $t_\PPP \leq \frac{((3k)^{12}m_0/\varepsilon^4)^{2^{2k^4/\varepsilon^2}}}{(3k)^6/\varepsilon^2}$.
	
	The graph case also follows analogously to the general partition case we dealt with above.

\end{proof}
As seen in the proof, the upper bound on the number of classes in the statement of the lemma is not the sharpest we can prove, we stay with the simpler bound for the sake of readability. In the simple graph and graphon case the above reads as follows.

\begin{corollary}%[Intermediate regularity lemma for $2$-graphons]
	\label{ch3:sreg}
	For every $\varepsilon>0$ and $W \in \mathcal W$ there exists a partition $\PPP=(P_1, \dots, P_m)$ of $[0,1]$ into $m \leq 16^{2^{1/\varepsilon^2}}/4$ parts, such that 
	\begin{align}
	\|W-W_\PPP\|_{\square\QQ} \leq \varepsilon. 
	\end{align}
	for each partition $\QQ$ of $[0,1]$ into at most $t_\PPP$ classes.
	
	With the additional condition that the partition classes should have the same measure the above is true with $m \leq \frac{(3^{12}/\varepsilon^4)^{2^{(2^4/\varepsilon^2)}}}{3^6/\varepsilon^2}$.
\end{corollary}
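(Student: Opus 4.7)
The corollary is the simple-graphon instance of \Cref{ch3:sregk}, i.e.\ the case where the input object is a single symmetric kernel $W$ rather than a $k\times k$ tuple of kernels, and the freedom function is chosen as $m_0\equiv 1$. My plan is to rerun the proof of \Cref{ch3:sregk} verbatim with $k=1$: since only one kernel component appears, in each non-terminating step the witness partition $\QQ$ and witness sets $S,T$ can be extracted directly from $\|W-W_{\RRR_{i-1}}\|_{\square\QQ}>\varepsilon$ with no loss from a union bound over color pairs, so the $k^4$ factor in every estimate collapses to $1$.

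Concretely, the same Cauchy--Schwarz chain that produced (\ref{ch3:eq07}) now gives the sharper increment $\|W_{\RRR_i}\|_2^2-\|W_{\RRR_{i-1}}\|_2^2>\varepsilon^2$, so since $\|W\|_2^2\le 1$ the iteration terminates after $m\le 1/\varepsilon^2$ steps. The class-count recursion with $m_0=1$ is $s(0)=1$, $s(1)=4$, $s(i+1)=4\,s(i)^2$, which solves to $s(i)=16^{2^{i-1}}/4$ and yields $t_\PPP\le 16^{2^{1/\varepsilon^2}}/4$, exactly the first claimed bound.

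For the equal-measure variant I would follow the snap-and-refine construction from the second half of the proof of \Cref{ch3:sregk}: after forming the common refinement $\RRR'_{i-1}$, adjust its class boundaries to obtain $\RRR''_{i-1}$ with class measures that are integer multiples of a small fraction of $1/t_{\RRR'_{i-1}}$, and then subdivide uniformly. By \Cref{ch3:claim4} the resulting $L^2$-norm squared loss from the snap step is at most $\varepsilon^2/2$, leaving a net gain of at least $\varepsilon^2/2$ per round and forcing termination in $O(1/\varepsilon^2)$ iterations. With $k=1$ and $m_0=1$ the companion recursion $s(i+1)=a\,s(i)^2$, $s(1)=a$ for $a=3^6/\varepsilon^2$ gives $s(i)=(a^2)^{2^{i-1}}/a$, and evaluating at the termination index yields the stated upper bound on $m$.

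No genuinely new difficulty arises beyond this bookkeeping on top of \Cref{ch3:sregk}. The one point warranting a quick check is that the refinement preserves symmetry of $W$, so that the output $W_\PPP$ is again a graphon: this is automatic, since a single partition $\PPP$ of $[0,1]$ acts on both coordinates throughout the construction and symmetry is preserved under rectangle-averaging. The only other fiddle is arithmetic reconciliation of the iteration count with the explicit constants displayed in the statement.
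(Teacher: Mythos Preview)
Your proposal is correct and is exactly the paper's intended derivation: the corollary is stated immediately after \Cref{ch3:sregk} as ``in the simple graph and graphon case the above reads as follows,'' with no separate argument, so specializing the proof of \Cref{ch3:sregk} to a single kernel (effectively $k=1$, $m_0\equiv 1$) is precisely what is meant. Your bookkeeping on the energy increment $\varepsilon^2$, the recursion $s(i)=16^{2^{i-1}}/4$, and the equal-measure recursion with $a=3^6/\varepsilon^2$ all line up with the lemma's proof; any residual mismatch in the displayed exponent for the equal-measure bound is just slack (or a typo) in the paper's stated constant, not a flaw in your argument.
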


%We illustrate the form of the original Regularity Lemma in the graphon context in order to provide a better understanding of the strength of the above statements.
%
%
%\begin{corollary}\label{ch3:szemreggraphon}
%	For every $\varepsilon>0$ and $W \in \Xi^2$ there exists a partition $\PPP=(P_1, \dots, P_m)$ of $[0,1]$ into $m$ parts that is at most a tower of twos of height $\poly(1/\varepsilon)$, such that 
%	\begin{align}\label{ch3:eqreg}
%	\sup_{S_{i,j} \subset P_i, T_{i,j} \subset P_j} \sum_{i,j=1}^m \left|\int_{S_{i,j} \times T_{i,j}}  W(x,y) - W_{\PPP}(x,y) \du x \du y \right| < \varepsilon,
%	\end{align}
%	where the supremum runs over all suitable measurable pairs $S_{i,j}$ and $T_{i,j}$ for $i,j \in [m]$.
%	
%\end{corollary}
%
%If we impose on the  the condition in the supremum that $S_{i,j}$ be the same for each $j \in [m]$ and disjoint if varying $i$, similarly but switching the indices for $T_{i,j}$, then the right hand side of (\ref{ch3:eqreg}) changes to the cut-$\PPP$-norm of $W-W_\PPP$. Another equivalent formulation of   (\ref{ch3:eqreg}) is 
%\begin{align}
%\sup_{S_i, T_i \subset [0,1]} \sum_{i,j=1}^m \left|\int_{(S_j \cap P_i) \times (T_i \cap S_j)}  W(x,y) - W_{\PPP}(x,y) \du x \du y \right| < \varepsilon.
%\end{align}
%

\section{Proof of Theorem \ref{ch6:graphmain}}\label{sec.thm}

%Our first lemmas concern graphon parameters, the techniques used here will be also employed later on, the calculation stays somewhat friendlier in the current case.

We will exploit the continuity of a testable graph parameter with respect to the cut norm and distance, and the connection of this characteristic to the sample complexity of the parameter. We require two results, the first one quantifies the above continuity. We generally assume that the sample complexity satisfies $q_g(\varepsilon)\geq 1/\varepsilon$, also $\varepsilon \leq 1$ and $k\geq 2$.
\begin{lemma} \label{ch6:dev}
	Let $g$ be a testable $k$-colored digraph parameter with sample complexity at most $q_g$. Then for any $\varepsilon > 0$ and two graphs, $\GG$ and $\HHH$, with $|V(\GG)|,|V(\HHH)| \geq \left(\frac{2q_g^2(\varepsilon/4)}{\varepsilon}\right)^{1/(q_g(\varepsilon/4)-1)}$ satisfying $\delta_\square(\GG,\HHH) \leq k^{- 2q_g^2(\varepsilon/4)}$ we have \begin{align*}|g(\GG)-g(\HHH)| \leq \varepsilon. \end{align*}
\end{lemma}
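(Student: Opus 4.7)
Let $q=q_g(\varepsilon/4)$. The plan is to combine testability of $g$ with \Cref{ch3:countlemma} in an event-intersection argument at the level of the isomorphism classes of the samples. Testability applied to $\GG$ says that $[\G(q,\GG)]$ lies in $A_\GG=\{[\FF]:|g(\FF)-g(\GG)|\leq\varepsilon/4\}$ with probability at least $1-\varepsilon/4$, and the analogue holds for $\HHH$ with $A_\HHH$. Letting $\mu_\GG,\mu_\HHH$ denote the laws on isomorphism classes of $k$-colored digraphs on $q$ vertices, this reads $\mu_\GG(A_\GG),\mu_\HHH(A_\HHH)\geq 1-\varepsilon/4$.

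The main step is to show that the total variation distance between $\mu_\GG$ and $\mu_\HHH$ is at most $\varepsilon/4$. By \Cref{ch3:countlemma} applied to $\WW_\GG,\WW_\HHH$,
\[
|t(\FF,\WW_\GG)-t(\FF,\WW_\HHH)|\leq \binom{q}{2}\delta_\square(\GG,\HHH)\leq \binom{q}{2}k^{-2q^2}
\]
for every labeled $k$-colored digraph $\FF$ on $q$ vertices. The diagonal color $\iota$ in the $k$-colored digraphon associated to a graph suppresses any non-injective substitution, so $t(\FF,\WW_\GG)=\hom(\FF,\GG)/|V(\GG)|^q$ exactly. Since there are at most $k^{q(q-1)}$ labeled $\FF$, summing the Counting Lemma bound yields an aggregate $L^1$ discrepancy of at most $\binom{q}{2}k^{-q^2-q}$ between $t(\cdot,\WW_\GG)$ and $t(\cdot,\WW_\HHH)$, which is exponentially small in $q$. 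Passing from $t$ to the actual sampling probability $\mu_\GG([\FF])$ introduces a conversion factor $|V(\GG)|^q/|V(\GG)|^{\underline{q}}$ per labeled sample, and the vertex-cardinality hypothesis $|V(\GG)|,|V(\HHH)|\geq(2q_g^2(\varepsilon/4)/\varepsilon)^{1/(q_g(\varepsilon/4)-1)}$ is what is needed to keep the resulting conversion correction below the remaining $\varepsilon$-budget, yielding the desired total variation bound.

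With total variation at most $\varepsilon/4$, the intersection argument $\mu_\HHH(A_\GG)\geq \mu_\GG(A_\GG)-\varepsilon/4\geq 1-\varepsilon/2$ forces $\mu_\HHH(A_\GG\cap A_\HHH)\geq 1-3\varepsilon/4>0$ (using $\varepsilon\leq 1$). Any iso class $[\FF]$ in this intersection satisfies both $|g(\FF)-g(\GG)|\leq\varepsilon/4$ and $|g(\FF)-g(\HHH)|\leq\varepsilon/4$, so the triangle inequality gives $|g(\GG)-g(\HHH)|\leq\varepsilon/2\leq\varepsilon$.

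The main technical obstacle is the bookkeeping in the middle paragraph: the Counting Lemma supplies a very strong pointwise density bound that survives the factor $k^{q(q-1)}$ from enumerating labeled samples, but the subgraph-density-to-sampling-probability conversion introduces a second source of error depending on $|V(\GG)|,|V(\HHH)|$, and one must verify that the specific form of the vertex-count lower bound (with exponent $1/(q-1)$) is exactly what makes this secondary error harmless.
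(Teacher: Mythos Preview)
Your proposal is correct and follows essentially the same route as the paper: both arguments set $q=q_g(\varepsilon/4)$, use testability to place the samples of $\GG$ and $\HHH$ in respective ``good'' sets with high probability, invoke the Counting Lemma (\Cref{ch3:countlemma}) together with the enumeration bound $k^{q(q-1)}$ to control the total variation between the two sample distributions, and handle the injective-versus-graphon sampling discrepancy via the vertex-cardinality hypothesis. The only cosmetic difference is that the paper writes out an explicit five-term triangle inequality (inserting the graphon samples $\G(q,\WW_\GG)$, $\G(q,\WW_\HHH)$ as intermediaries) and phrases the total-variation step as a coupling with $\PP(\G(q,\WW_\GG)\neq\G(q,\WW_\HHH))\leq\varepsilon$, whereas you work directly with the measures $\mu_\GG,\mu_\HHH$ on isomorphism classes and use an intersection argument; these are equivalent packagings of the same idea.
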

\begin{proof}
	Let $\varepsilon>0$, $\GG$ and $\HHH$ be as in the statement, and set $q=q_g(\varepsilon/4)$. Then we have
	\begin{align} \label{ch6:1eq1}
	|g(\GG)-g(\HHH)|  & \leq |g(\GG)-g(\G(q,\GG))|+ 
	|g(\G(q,\WW_\GG))-g(\G(q,\GG))| \nonumber \\
	&  \quad + |g(\G(q,\WW_\GG))-g(\G(q,\WW_\HHH))| + |g(\G(q,\HHH))-g(\G(q,\WW_\HHH))| \nonumber \\ & \quad +|g(\HHH)-g(\G(q,\HHH))|.
	\end{align}
	The first and the last term on the right of $(\ref{ch6:1eq1})$ can be each upper bounded by $\varepsilon/4$ with cumulative failure probability $\varepsilon/2$ due to the assumptions of the lemma. To deal with the second term we require the fact that $\G(q, \GG)$ and $\G(q, \WW_\GG)$ have the same distribution conditioned on the event that the $X_i$ variables that define $\G(q,\WW_\GG)$ lie in different classes of the canonical equiv-partition of $[0,1]$ into $|V(\GG)|$ classes. The failure probability of the latter event can be upper bounded by $q^2/2|V(\GG)|^{q-1}$, which is at most $\varepsilon/4$, analogously for the fourth term. Until this point we have not dealt with the relationship of the two random objects $\G(q, \WW_\GG)$ and $\G(q, \WW_\HHH)$, therefore the above discussion is valid for every coupling of them. 
	
	In order to handle the third term we upper bound the probability that the two random graphs are different by means of an appropriate coupling, since clearly in the event of identity the third term of (\ref{ch6:1eq1}) vanishes. More precisely, we will show that $\G(q,\WW_\GG)$ and $\G(q,\WW_\HHH)$ can be coupled in such a way that $\PP(\G(q,\WW_\GG)\neq \G(q,\WW_\HHH)) < 1-\varepsilon$. We utilize that for a fixed $k$-colored digraph $\FFF$ on $q$ vertices we can upper bound the deviation of the subgraph densities of $\FFF$ in $\GG$ and $\HHH$  through the cut distance of these graphs, see \Cref{ch3:countlemma}. In particular, 
	
	\begin{align}
	|\PP(\G(q,\WW_\GG)=\FFF)-\PP(\G(q,\WW_\HHH)=\FFF)| \nonumber \leq {q \choose 2} \delta_\square(\WW_\GG,\WW_\HHH) .
	\end{align}
	Therefore in our case
	\begin{align*}
	\sum_{\FFF} |\PP(\G(q,\WW_\GG)=\FFF)-\PP(\G(q,\WW_\HHH)=\FFF)| \leq k^{2{q \choose 2 }} {q \choose 2 } k^{-2q^2} \leq \varepsilon,
	\end{align*}
	where the sum goes over all labeled $k$-colored digraphs $\FFF$ on $q$ vertices.
	
	Since there are only finitely many possible target graphs for the random objects, we can couple $\G(q,\WW_\GG)$ and $\G(q,\WW_\HHH)$ 
	%via the underlying independent uniform $[0,1]$ random variables $\{X_i\}_{1\leq i \leq q}$, and $\{Y_{i,j}\}_{1\leq i <j \leq q}$, paying attention that the overlay satisfies  $\PP[\G(q,\WW_\GG)(ij)\neq\G(q,\WW_\HHH)(ij)| X_i, X_j]= \sum_{\alpha, \beta=1}^k |W_\GG^{(\alpha, \beta)}(X_i,X_j)-W_\HHH^{(\alpha, \beta)}(X_i,X_j)|$ for all $ij \in {[q] \choose 2}$, 
	so that in the end we have $\PP(\G(q,\WW_\GG)\neq \G(q,\WW_\HHH)) \leq \varepsilon$. This implies that with positive probability (in fact, with at least $1-2\varepsilon$) the sum of the five terms on the right hand side of (\ref{ch6:1eq1}) does not exceed $\varepsilon$, so the statement of the lemma follows.
	
\end{proof} 

We will also require the following statement which can be regarded as the quantitative counterpart of Lemma 3.2 from \cite{LV}. It clarifies why the cut-$\PPP$-norm, (\Cref{ch3:cutpnorm}, \Cref{ch3:cutpnorm2}) and  the need for the accompanying regularity lemma, \Cref{ch3:sregk}, are essential for our intent. 

\begin{lemma}\label{ch6:lemma1}
	Let $k\geq 2$, $\varepsilon>0$, $U$ be a step function with steps $\PPP=(P_1, \dots, P_t)$ and $V$ be a graphon with $\|U-V\|_{\square\PPP} \leq \varepsilon.$ For any $k$-colored digraphon $\UU=(U^{(1,1)},\dots,U^{(k,k)})$ that is a step function with steps from $\PPP$ and a $(k,m)$-coloring of $U$  there exists a $(k,m)$-coloring $\VV=(V^{(1,1)},\dots,V^{(k,k)})$ of $V$ so that $\|\UU-\VV\|_\square= \sum_{\alpha, \beta=1}^k \|U^{(\alpha, \beta)}-V^{(\alpha, \beta)}\|_\square \leq k^2\varepsilon.$
	
	If $V=W_G$ for a simple graph $G$ on $n\geq 16/\varepsilon^2$ nodes and $\PPP$ is an $\mathcal{I}_n$-partition of $[0,1]$ then there is a $(k,m)$-coloring $\GG$ of $G$ that satisfies the above conditions and $\|\UU-\WW_\GG\|_\square \leq 2k^2\varepsilon.$
	
\end{lemma}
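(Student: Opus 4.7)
The strategy is to (i) write down a candidate $(k,m)$-coloring $\VV$ of $V$ by ``rescaling'' $\UU$ so that its shadow is $V$ instead of $U$, ensuring that on each step of $\PPP$ the difference $\VV-\UU$ is, componentwise, a bounded scalar multiple of $V-U$, and then (ii) in the simple graph case, discretize $\VV$ by sampling an independent color pair on each unordered pair of $V(G)$ and using concentration. The first step is where the hypothesis $\|U-V\|_{\square\PPP}\le\varepsilon$ enters, via the very definition of the cut-$\PPP$-norm; the second step is where $n\ge 16/\varepsilon^2$ enters, via Hoeffding and a union bound.

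For (i), put $E=\{(\alpha,\beta)\in[k]^2:\min(\alpha,\beta)\le m\}$, $N=[k]^2\setminus E$, and let $u_{ij}=\sum_{(\alpha,\beta)\in E}U^{(\alpha,\beta)}_{ij}$, which is the constant value of $U$ on $P_i\times P_j$ since $\UU$ is a $(k,m)$-coloring of $U$. On a step with $u_{ij}\in(0,1)$ I define, for $x\in P_i$, $y\in P_j$,
\[
V^{(\alpha,\beta)}(x,y)=\tfrac{U^{(\alpha,\beta)}_{ij}}{u_{ij}}V(x,y)\ \ ((\alpha,\beta)\in E),\qquad V^{(\alpha,\beta)}(x,y)=\tfrac{U^{(\alpha,\beta)}_{ij}}{1-u_{ij}}(1-V(x,y))\ \ ((\alpha,\beta)\in N).
\]
On a step with $u_{ij}=0$ I set $V^{(1,1)}=V$ and $V^{(\alpha,\beta)}=(1-V)U^{(\alpha,\beta)}_{ij}$ for $(\alpha,\beta)\in N$; symmetrically, on a step with $u_{ij}=1$ I set $V^{(m+1,m+1)}=1-V$ and $V^{(\alpha,\beta)}=V\,U^{(\alpha,\beta)}_{ij}$ for $(\alpha,\beta)\in E$. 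Since the default pairs $(1,1)$ and $(m+1,m+1)$ are self-symmetric and $\UU,V$ are symmetric, $\VV$ is a symmetric $(k,m)$-coloring of $V$. One then checks directly that in every case and for every pair, $V^{(\alpha,\beta)}-U^{(\alpha,\beta)}$ equals a constant $c^{(\alpha,\beta)}_{ij}\in[-1,1]$ times $(V-U)$ on $P_i\times P_j$. Hence for arbitrary measurable $A,B\subset[0,1]$,
\[
\Bigl|\int_{A\times B}(V^{(\alpha,\beta)}-U^{(\alpha,\beta)})\Bigr|\le\sum_{i,j}\Bigl|\int_{(A\cap P_i)\times(B\cap P_j)}(V-U)\Bigr|\le\|U-V\|_{\square\PPP}\le\varepsilon,
\]
and summing over the $k^2$ colors yields $\|\UU-\VV\|_\square\le k^2\varepsilon$, which is the first assertion.

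For (ii), since $\PPP$ is an $\mathcal{I}_n$-partition, $\VV$ is constant on the cells of the canonical $n$-partition, and because $V=W_G$ is $\{0,1\}$-valued on off-diagonal cells, the family $(V^{(\alpha,\beta)})_{\alpha,\beta}$ is on each such cell a probability distribution on $[k]^2$, supported in $E$ if the cell is an edge of $G$ and in $N$ otherwise. Let $\GG$ be the random $(k,m)$-coloring of $G$ obtained by independently sampling a color pair from this distribution for each unordered pair $\{a,b\}\subset V(G)$; the support condition ensures that $\GG$ is deterministically a $(k,m)$-coloring of $G$, and $\E\,\WW_\GG=\VV$ off the diagonal. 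For fixed $A,B\subset V(G)$ and fixed $(\alpha,\beta)$, the sum $\sum_{a\in A,b\in B}(\WW_\GG^{(\alpha,\beta)}(a,b)-V^{(\alpha,\beta)}(a,b))$ is a sum of at most $n^2$ independent $[-1,1]$-valued mean-zero random variables indexed by unordered pairs. Hoeffding and a union bound over the $4^nk^2$ triples $(A,B,(\alpha,\beta))$ give, for $n\ge 16/\varepsilon^2$, that with positive probability $\|\VV^{(\alpha,\beta)}-\WW_\GG^{(\alpha,\beta)}\|_\square\le\varepsilon$ for every color pair, with the diagonal contribution of measure $\le 1/n\le\varepsilon^2/16$ absorbed into the slack. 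Combining with (i) by the triangle inequality produces $\|\UU-\WW_\GG\|_\square\le 2k^2\varepsilon$.

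The only delicate point I expect is the bookkeeping for the degenerate steps $u_{ij}\in\{0,1\}$, in particular the preservation of digraphon symmetry when assigning the ``leftover'' mass to the default color pair; once the ``scalar multiple of $V-U$ per step'' description is in place, the first bound is immediate from the very definition of $\|\cdot\|_{\square\PPP}$, and the second is routine concentration.
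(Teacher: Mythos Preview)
Your proposal is correct and follows essentially the same approach as the paper's proof. The only cosmetic differences are: (a) on the degenerate steps $u_{ij}\in\{0,1\}$ the paper spreads the leftover mass uniformly over all admissible color pairs rather than dumping it into a single self-symmetric pair, and (b) for the graph case the paper invokes a concentration lemma from \cite{BCL} (their Lemma~4.3) giving $\|V^{(\alpha,\beta)}-W_\GG^{(\alpha,\beta)}\|_\square\le 4/\sqrt{n}$ with failure probability $2^{-n}$, rather than spelling out the Hoeffding plus $4^n k^2$ union bound as you do---but this is exactly the same computation.
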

\begin{proof}
	Fix $\varepsilon >0$, and let $U$, $V$, and $\UU$ be as in the statement of the lemma. Then $\sum_{\alpha,\beta=1}^k U^{(\alpha, \beta)}=1$, let $M$ be the subset of $[k]^2$ such that its elements have at least one component that is at most $m$, so we have  $\sum_{(\alpha, \beta) \in M} U^{(\alpha, \beta)}=U$ by definition. For $(\alpha, \beta) \in M$ set $V^{(\alpha, \beta)}= \frac{V U^{(\alpha, \beta)}}{U}$ on the set where $U>0$ and $V^{(\alpha, \beta)}=\frac{V}{k^2-(k-m)^2}$ where $U=0$, furthermore for $(\alpha, \beta) \notin M$ set $V^{(\alpha, \beta)}= \frac{(1-V) U^{(\alpha, \beta)}}{1-U}$ on the set where $U<1$ and $V^{(\alpha, \beta)}=\frac{1-V}{(k-m)^2}$ where $U=1$. We will show that the $k$-colored digraphon $\VV$ defined this way satisfies the conditions, in particular for each $(\alpha, \beta) \in [k]^2$ we have $\|U^{(\alpha, \beta)}-V^{(\alpha, \beta)}\|_\square \leq \varepsilon $. We will explicitly perform the calculation only for $(\alpha, \beta) \in M$, the other case is analogous. Fix some $S, T \subset [0,1]$, then
	\begin{align*}
	&\left|\int_{S\times T}  U^{(\alpha, \beta)}-V^{(\alpha, \beta)}\right| = \left|\int_{S\times T, U>0}  U^{(\alpha, \beta)}-V^{(\alpha, \beta)} + \int_{S\times T,U=0}  U^{(\alpha, \beta)}-V^{(\alpha, \beta)}\right|  \\
	& \qquad\leq \sum_{i,j=1}^t \left| \int_{(S\cap P_i) \times (T\cap P_j), U>0} \frac{U^{(\alpha, \beta)}}{U} (U-V) + \int_{(S\cap P_i) \times (T\cap P_j),U=0} \frac{1}{k^2-(k-m)^2} (U-V)\right| \\
	&\qquad =\sum_{i,j=1}^t \left|\int_{(S\cap P_i) \times (T\cap P_j)} (U - V ) \left[\I_{U>0} \frac{U^{(\alpha, \beta)}}{U}+ \I_{U=0}\frac{1}{k^2-(k-m)^2}\right]\right|\\ & \qquad\leq \sum_{i,j=1}^t \left|\int_{(S\cap P_i) \times (T\cap P_j)} (U - V )\right| \\ & \qquad = \|U-V\|_{\square\PPP} \leq \varepsilon.
	\end{align*}
	The second inequality is a consequence of $\left[\I_{U>0} \frac{U^{(\alpha, \beta)}}{U}+ \I_{U=0}\frac{1}{k^2-(k-m)^2}\right]$ being a constant between $0$ and $1$ on each of the rectangles $P_i \times P_j$. 
	
	We prove now the second statement of the lemma concerning graphs with $V=W_G$ and a partition $\PPP$ that is an $\mathcal{I}_n$-partition. The general discussion above delivers the existence of $\VV$ that is a $(k,m)$-coloring of $W_G$, which can be regarded as a fractional coloring of $G$, as $\VV$ is constant on the sets associated with nodes of $G$. For $|V(G)|=n$ we get for each $ij \in {[n] \choose 2}$ a probability distribution on $[k]^2$ with $\PP\left(Z_{ij}=(\alpha,\beta)\right)=n^2 \int_{[\frac{i-1}{n},\frac{i}{n}] \times [\frac{j-1}{n},\frac{j}{n}]} V^{(\alpha, \beta)}(x,y) \du x \du y$. For each pair $ij$ we make an independent random choice according to this measure, and color $(i,j)$ by the first, and $(j,i)$ by the second component of $Z_{ij}$ to get a proper $(k,m)$-coloring $\GG$ of $G$. It remains to conduct the analysis of the deviation in the statement of the lemma, we will show that this is small with high probability with respect to the randomization, which in turn implies the existence. 
	%To do this, let us fix $\alpha, \beta \in [k]$ and also the measurable sets $S,T \subset [0,1]$. 
	We have

	\begin{align*}
	\|\UU-\WW_\GG\|_\square &\leq \|\UU-\VV\|_\square +  \|\VV-\WW_\GG\|_\square \\
	&\leq k^2\varepsilon + \sum_{\alpha,\beta=1}^k   \|V^{(\alpha,\beta)}-W^{(\alpha,\beta)}_\GG\|_\square
	\end{align*}
	For each  $(\alpha, \beta) \in [k]^2$ we have that $\PP\left( \|V^{(\alpha,\beta)}-W^{(\alpha,\beta)}_\GG\|_\square \geq 4/\sqrt{n} \right) \leq 2^{-n}$, this result is exactly Lemma 4.3 in \cite{BCL}. This implies for $n \geq 16/\varepsilon^2$ the existence of a suitable coloring, which in turn finishes the proof of the lemma.
	
\end{proof}

\begin{remark}
	Actually we can perform the same proof to verify the existence of a $k$-coloring $\VV$ such that $d_{\square\PPP}(\UU, \VV) \leq k^2\varepsilon$. On the other hand, we can not weaken the condition on the closeness of $U$ and $V$, a small cut-norm of $U-V$ does not imply the existence of a suitable coloring $\VV$, for example in the case when the number of steps of $\UU$ is exponential in $1/\|U-V\|_{\square}$.
\end{remark}

We proceed towards the proof of the main statement of the paper. Before we can outline that we require yet another specific lemma.

Let $\mathcal M_{\Delta,n}$ denote the set of $\mathcal{I}_n$-step functions $U$ that have steps $\PPP_U$ with $|\PPP_U| \leq t_k(\Delta,1)$ classes,  and values between $0$ and $1$, where $t_k$ is the function from \Cref{ch3:sregk}. In order to verify \Cref{ch6:graphmain} we will condition on the event that is formulated in the following lemma. Recall \Cref{ch3:cutpnorm} for the deviation $d_{W,\PPP}(V)$.

\begin{lemma}\label{ch6:lemma2}
	Let $G$ be a simple graph on $n$ vertices and $\Delta >0$. Then for $q \geq  %2^{2(2k^2+1)^{17k^2/\Delta^2}}
	2^{2^{(2k^4/\Delta^2)+4}}$ we have 
	\begin{align}
	| d_{U, \PPP_U}(G) - d_{U, \PPP_U}(\G(q,G))| \leq \Delta, 
	\end{align}
	for each $U \in \mathcal M=\mathcal M_{\Delta,n}$ simultaneously with probability at least $1-\exp(-\frac{\Delta^2q}{2^7})$, whenever $n \geq 4q/\Delta$. 
\end{lemma}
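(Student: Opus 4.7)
The plan is to combine bounded-differences concentration of $d_{U,\PPP_U}(G[R])$ around $d_{U,\PPP_U}(G)$ for each fixed pattern of $U$, with a uniformization over the family $\mathcal{M}$.

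\emph{Step 1 (pattern reduction).} A change-of-variables argument shows that $d_{U,\PPP_U}(G)$ depends only on the \emph{pattern} of $U$, namely the class measures $p_i=\lambda(P_i)$ and the values $u_{ij}$ that $U$ takes on $P_i\times P_j$. Indeed, the identity
\begin{align*}
\|W_G^\phi - U\|_{\square\PPP_U} \;=\; \|W_G - U^{\phi^{-1}}\|_{\square\phi(\PPP_U)}
\end{align*}
together with the fact that the infimum in the definition of $d_{U,\PPP_U}$ runs over all measure-preserving $\phi$, yields $d_{U,\PPP_U}(G) = \inf_{(U',\PPP')} \|W_G - U'\|_{\square\PPP'}$ where the infimum is over $(U',\PPP')$ having the pattern of $(U,\PPP_U)$; the same is true for $d_{U,\PPP_U}(\G(q,G))$.

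\emph{Step 2 (concentration at a fixed pattern).} For $U$ fixed, view $f(R):=d_{U,\PPP_U}(G[R])$ as a function of the random $q$-subset $R\subset V(G)$. Swapping a single vertex of $R$ alters at most $2(q-1)$ edges of $G[R]$, so the associated graphon moves by at most $4/q$ in $L^1$-norm; since $\|\cdot\|_{\square\PPP_U}$ is dominated by $\|\cdot\|_1$, the function $f$ has bounded differences with constants $4/q$. McDiarmid's inequality for sampling without replacement then gives
\begin{align*}
\PP(|f(R)-\E f(R)|>\Delta/4) \;\leq\; 2\exp(-\Delta^2 q/2^7).
\end{align*}
To compare $\E f(R)$ with $d_{U,\PPP_U}(G)$, in one direction I plug the restriction $V_i\cap R$ of an optimal $V$-partition of $V(G)$ into the definition of $f(R)$ and invoke hypergeometric concentration of $|V_i\cap R|$ around $qp_i$ and of $e(V_i\cap R,V_j\cap R)$ around its mean; in the opposite direction I extend an optimal partition of $R$ to $V(G)$ by assigning non-sample vertices independently to class $i$ with probability $p_i$, and control the cut-$\PPP_U$-norm of the extension via the identity $\|\cdot\|_{\square\PPP_U}=\max_{A\in\{\pm1\}^{t\times t}}\|(\cdot)^A\|_\square$ of \Cref{ch3:rem1} followed by a union bound over the $2^{t^2}$ sign matrices $A$ and an application of \Cref{ch3:samp} to each weighted kernel $(W_G^\phi-U)^A$.

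\emph{Step 3 (uniformization over $\mathcal{M}$).} The family $\mathcal{M}$ is continuously parametrized by patterns, and the deviation $f(R)-d_{U,\PPP_U}(G)$ is Lipschitz in the pattern with respect to the $L^1$-norm, uniformly in $G$ and $R$. A pattern net of mesh $\Delta/8$ produces a covering $\mathcal N$ of $\mathcal M$ with $|\mathcal N| \leq (c/\Delta)^{t^2+t}$, where $t=t_k(\Delta,1)$. Combining the per-pattern concentration from Step~2 with a union bound over $\mathcal N$ and using the hypothesis $q\geq 2^{2^{2k^4/\Delta^2+4}}$ (which is large enough to absorb both the $\log|\mathcal N|$ cost and the slack in the expectation estimate), yields the simultaneous bound with failure probability at most $\exp(-\Delta^2 q/2^7)$.

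\emph{Main obstacle.} The most delicate part is the lower direction of the expectation estimate in Step~2: an optimal partition of the sample does not directly lift to a partition of $V(G)$, and the random-extension argument has to be balanced against the $t^2$-blowup that appears when passing between the cut-norm and the cut-$\PPP_U$-norm. This balance, together with the need for the $\log|\mathcal N|$ cost in Step~3 to fit inside the McDiarmid rate, is precisely what forces the lower bound on $q$ in the hypothesis to be doubly exponential in $1/\Delta^2$.
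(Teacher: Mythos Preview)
Your approach is genuinely different from the paper's, and the gap you identify as the ``main obstacle'' is real and is not closed by the sketch you give.

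\textbf{Where the proposal breaks.} In Step~2, the ``opposite direction'' asks you to show that a good overlay of the sample $F=\G(q,G)$ onto $U$ can be promoted to a good overlay of $G$ onto $U$. Your suggestion is to take the optimal class assignment on $R$ and extend it to $V(G)\setminus R$ by assigning each non-sample vertex to class $i$ independently with probability $p_i$. This cannot work: the random extension is blind to the structure of $G$ outside $R$. Concretely, let $G$ be complete bipartite with parts $A,B$ of size $n/2$, and let $U$ be the two-step graphon with $p_1=p_2=1/2$, $u_{11}=u_{22}=0$, $u_{12}=1$. Then $d_{U,\PPP_U}(G)=0$ and, with high probability, $d_{U,\PPP_U}(F)\approx 0$ via the obvious overlay. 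But your random extension places roughly a quarter of $A$ and a quarter of $B$ into each class, so the induced density inside each class is about $1/2$, and $\|W_G^\phi-U\|_{\square\PPP_U}\geq \tfrac{1}{8}-o(1)$, a constant. The subsequent invocation of \Cref{ch3:samp} does not rescue this: that lemma controls $\delta_\square(W,\G(q,W))$ for a \emph{fixed} kernel $W$, whereas here the kernel $(W_G^\phi-U)^A$ depends on $\phi$, which in turn depends on the sample and on the random extension. The upper direction (restricting an optimal overlay of $G$ to $R$) is fine; the lower direction is the whole difficulty of the lemma, and it does not follow from bounded differences plus a naive lift.

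\textbf{How the paper proceeds.} The paper avoids any per-pattern analysis. It applies the intermediate regularity lemma, \Cref{ch3:sregk}, to $W_G$ (not to $U$) with parameters $\Delta/4$ and $m_0=t_k(\Delta,1)=t_1$, obtaining an $\mathcal{I}_n$-partition $\PPP$ with at most $t_2$ classes such that $\|W_G-(W_G)_\PPP\|_{\square\QQ}\leq \Delta/4$ for \emph{every} $\QQ$ with at most $t_1$ classes. Via \Cref{ch3:rem1} this is a finite family of ground state energy bounds, and \Cref{ch4:maincor1} transfers each of them to the sample with exponentially small failure probability; a union bound over the $2^{O(t_1^2)}$ sign matrices is absorbed because $q\geq t_2^2$. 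Separately, a concentration-of-class-sizes argument gives an $\mathcal{I}_n$-permutation $\psi$ with $\|(W_G)_\PPP - (W_{\hh(q,(W_G)_\PPP)})^\psi\|_1\leq \Delta/4$. Combining these by the triangle inequality yields a \emph{single} permutation $\psi$ such that $\|W_G-(W_F)^\psi\|_{\square\QQ}\leq \Delta$ simultaneously for all $\QQ$ with at most $t_1$ classes. The lemma then follows for every $U\in\mathcal M_{\Delta,n}$ at once, by inserting $\psi$ (or $\psi^{-1}$) into the infimum defining $d_{U,\PPP_U}$ and using the triangle inequality for $\|\cdot\|_{\square\PPP_U}$.

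In short: the paper replaces your ``per-pattern concentration $+$ net'' scheme by ``one regularity partition of $G$ $+$ energy testability'', which is precisely what furnishes the missing direction you could not obtain by random extension.
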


\begin{proof}
	Let $G$ and $\Delta>0$ be arbitrary, and $q$ be such that it satisfies the conditions of the lemma. For technical convenience we assume that $n$ is an integer multiple of $q$, let us introduce the quantity $t_1=t_k(\Delta,1)=2^{2^{\frac{k^4}{\Delta^2}+2}-2}$, and denote $\G(q,G)$ by $F$. For the case when $q$ is not a divisor of $n$, then we just add at most $q$ isolated vertices to $G$ to achieve the above condition, by this operation $d_{U, \PPP}(G)$ is changed by at most $q/n$. Also, we can couple in a way such that $d_{U, \PPP}(\G(q,G))$ remains unchanged with  probability at least $1-q/n.$
	%, one could make this argument precise but we will not get deeper in the analysis of this case.    
	
	We will show that there exists an $\mathcal{I}_n$-permutation $\phi$ of $[0,1]$ such that $\|W_G-W_F^\phi\|_{\square\QQ} < \Delta$ for any $\mathcal{I}_n$-partition $\QQ$ of $[0,1]$ into at most $t_1$ classes with high probability. Applying \Cref{ch3:sregk} with the error parameter $\Delta/4$ and $m_0(\Delta)=t_k(\Delta,1)$ for approximating $W_G$ by a step function we can assert that there exists an $\mathcal{I}_n$-partition $\PPP$ of $[0,1]$ into $t_\PPP$ classes with $t_\PPP \leq t_2$ with $t_2=t_k(\Delta/4,t_k(\Delta,1))=2^{2^{(2k^4/\Delta^2)+2}+2^{(k^4/\Delta^2)+1}}\leq 2^{2^{(2k^4/\Delta^2)+3}}$ such that for every  $\mathcal{I}_n$-partition $\QQ$ into $t_\QQ$ classes  $ t_\QQ \leq \max\{t_\PPP, t_1\}$ it holds that
	\begin{align*}
	\|W_G-(W_G)_\PPP\|_{\square \QQ} \leq \Delta/4.
	\end{align*}
	We only need here 
	\begin{align}\label{ch6:eq111}
	\sup_{\QQ : t_\QQ\leq t_1} \|W_G-(W_G)_\PPP\|_{\square \QQ} \leq \Delta/4.
	\end{align}
	This property is by \Cref{ch3:rem1} equivalent to stating that 
	\begin{align}\label{ch6:1eq3}
	\max_{\QQ}\max_{A \in \mathbb A}\max_{S,T \subset [0,1]} \sum_{i,j=1}^{t_1} A_{i,j} \int_{S \times T} (W_G-(W_G)_\PPP)(x,y) \I_{Q_i}(x)\I_{Q_j}(y) \du x \du y \leq \Delta/4, 
	\end{align}
	where $\mathbb A$ is the set of all $t_1 \times t_1$ matrices with $-1$ or $+1$ entries.
	
	We can reformulate the above expression (\ref{ch6:1eq3}) by putting 
	\begin{align*}J =
	\begin{pmatrix}
	1 & 0 & 1 & 0 \\
	1 & 0 & 1 & 0 \\
	0 & 0  & 0 & 0  \\
	0 & 0 & 0 & 0
	\end{pmatrix},
	\end{align*}
	and defining the tensor product $B_A=A \otimes J$, so that $B_{i,j}^{\alpha, \beta}=A_{ij}J_{\alpha, \beta}$ for each $A \in \mathbb A$. The first matrix $J$ corresponds to the $\mathcal{I}_n$-partition $(S\cap T, S \setminus T, T\setminus S, [0,1]\setminus(S \cup T))=(T_1, T_2, T_3, T_4)$ generated by a pair $(S,T)$ of $\mathcal{I}_n$-sets of $[0,1]$ so that for any function $U \colon[0,1]^2 \to \R$ it holds that
	\begin{align*}
	\sum_{i,j=1}^4 J_{ij} \int_{[0,1]^2} U(x,y) \I_{T_i}(x) \I_{T_j} \du x \du y = \int_{S \times T} U(x,y) \du x \du y.
	\end{align*}
	
	It follows that the inequality (\ref{ch6:1eq3}) is equivalent to saying
	\begin{align}\label{ch6:1eq4}
	\max_{A \in \mathbb A}\max_{\hat\QQ} \sum_{i,j=1}^{t_1}\sum_{\alpha, \beta=1}^{4} (B_A)^{\alpha,\beta}_{i,j} \int_{[0,1]^2} (W_G-(W_G)_\PPP)(x,y) \I_{Q_i^\alpha}(x)\I_{ Q_j^\beta}(y) \du x \du y \leq \Delta/4, 
	\end{align} 
	where the second maximum goes over all $\mathcal{I}_n$-partitions $\hat \QQ= (Q_i^\alpha)_{\substack{i \in [t_1]\\ \alpha \in [4]}}$ into $4t_1$ classes. Let us substitute an arbitrary graphon $U$ for $W_G-(W_G)_\PPP$ in (\ref{ch6:1eq4}) and define 
	\begin{align*}\hat h_{A,\hat \QQ}(U)= \sum_{\substack{1\leq i,j\leq t_1\\1\leq \alpha,\beta \leq 4 }} (B_A)^{\alpha,\beta}_{i,j} \int_{[0,1]^2} U(x,y) \I_{Q_i^\alpha}(x)\I_{ Q_j^\beta}(y) \du x \du y\end{align*}
	and
	
	\begin{align*}\hat h_{A}(U)=\max_{\hat\QQ} h_{A,\hat \QQ}(U)\end{align*} as the expression whose optima is sought for a fixed $A \in \mathbb A$. 
	
	For notational convenience only lower indices will be used when referring to the entries of $B_A$. We introduce a relaxed version $h_{A}$ of the above function  $\hat h_{A}$ by replacing the requirement on $\hat \QQ$ being an $\mathcal{I}_n$-partition, instead we define \begin{align*}h_{A,f}(U)=\sum_{1\leq i,j\leq 4t_1} (B_A)_{i,j} \int_{[0,1]^2} U(x,y) f_i(x)f_j(y) \du x \du y\end{align*} with $f=(f_i)_{i \in [4t_1]}$ being a fractional $\mathcal{I}_n$-partition into $4t_1$ classes, that is, each component of $f$ is a non-negative $\mathcal{I}_n$-function, and their sum is the constant $1$ function. Further, we define $h_A(U)=\max_f h_{A,f}(U)$, where $f$ runs over all fractional $\mathcal{I}_n$-partitions into $4t_1$ parts. It is easy to see that 
	\begin{align*}|\hat h_{A}(U)- h_{A}(U)|\leq 1/n,\end{align*}
	since the two functions coincide when $U$ is $0$ on the diagonal blocks.
	Denote $U'=W_{\hh(q,U)}$, where the graphon is given by the increasing order of the sample points $\{ \, X_i\mid i \in [q]\,\}$. We wish to upper bound the probability that the deviation $| h_A(U)- h_A(U')|$ exceeds $\Delta/4$, for some $A \in \mathbb A$. Similarly as above, $|\hat h_{A}(U')- h_{A}(U')|\leq 1/q.$
	
	We remark, that a simple approach would be using a slight variant of the counting lemma \Cref{ch3:countlemma} that $|h_A(U)-h_A(U')|\leq 16t_1^2 \delta_\square(U,U')$ together with a version of \Cref{ch3:samp} for kernels with perhaps negative values, this way we would have to impose a lower bound on $q$ that is exponential in $t_1$ in order to satisfy the statement of the lemma. We can do slightly better using more involved methods.
	
	We require the notion of ground state energies from \cite{BCL2}. Let
	
	\begin{align*}
	\hat \EEE(G,J)=\max_{\QQ} \sum_{i,j=1}^s J_{i,j} \int_{[0,1]^2} \I_{Q_{i}}(x)\I_{Q_{j}}(y) W_G(x,y) \du x \du y,
	\end{align*}
	where the maximum runs over all $\mathcal{I}_n$-partitions $\QQ$ into $s$ parts when $|V(G)|=n$.
	
	Further,
	\begin{equation*} 
	\EEE(U,J)=\sup_{f} \sum_{i,j=1}^s J_{i,j} \int_{[0,1]^2} f_{i}(x)f_{j}(y) U(x,y) \du x \du y, 
	\end{equation*}	
	where the supremum runs over all fractional partitions $f$ into $s$ parts.

	The next result was first proved in \cite{AVKK2}, subsequently refined in \cite{KM1}. %It also deals with parameters of uniform hypergraphs of higher rank, which will be useful in \Cref{ch6:sec:weaknd}.  

	\begin{theorem}\cite{AVKK2}\cite{KM1}\label{ch4:maincor1}
		Let $s \geq 1$, and $\rho >0$. Then there is an absolute constant $c>0$ such that for any $s \geq 1$, $\rho >0$, kernel $U$, real matrix $J$, and $q \geq c\Theta^4  \log(\Theta)$ with $\Theta=\frac{s^2}{\rho}$  we have 
		\begin{align}\label{ch6:1eqq}
		\PP(|\EEE(U,J)-\hat \EEE(\G(q,U),J)|>\rho \|U\|_\infty)< 2 \exp\left( -\frac{\rho^2q}{32}\right).
		\end{align}
	\end{theorem}

	We have seen above that $\hat h_{A}(U)=\hat \EEE(U,B_A)$ and $h_{A}(U)=\EEE(U,B_A).$ Since $q\geq t_2^2 \geq 2^{85} t_1^{10}/\Delta^5$ we can apply \Cref{ch4:maincor1} for each $A \in \mathbb A$ with $s=4t_1$, and $\rho=\Delta/4.$

	This shows eventually that with probability at least $1- 2^{16t_1^2+1} \exp\left( -\frac{\Delta^2q}{2^9}\right)\leq 1- \exp\left( -\frac{\Delta^2q}{2^8}\right)$ we have 
	\begin{align}\label{ch6:1eq5}
	\max_{A \in \mathbb A}\max_{\hat\QQ} \sum_{i,j=1}^{t_1}\sum_{\alpha, \beta=1}^{4} (B_A)^{\alpha,\beta}_{i,j} \int_{[0,1]^2} (W_{\G(q, U)})(x,y) \I_{Q_i^\alpha}(x)\I_{ Q_j^\beta}(y) \du x \du y \leq \Delta/2,
	\end{align} 
	where the second maximum runs over all $\mathcal{I}_q$-partitions $\hat\QQ$ of $[0,1]$ into $4t_1$ parts. Denote this event by $E_1$.
	
	This however is equivalent to saying that for every $\QQ$ partition into $t_\QQ$ classes  $ t_\QQ \leq t_1$ it is true that
	\begin{equation}\label{ch6:1eq6}
	\|W_{\G(q, U)}\|_{\square \QQ} \leq \Delta/2.
	\end{equation}

	The second estimate we require concerns the closeness of the step function $(W_G)_\PPP$ and its sample $W_{\hh(q,(W_G)_\PPP)}$. Our aim is to overlay these two functions via measure preserving permutations of $[0,1]$, such that the measure of the subset of $[0,1]^2$ where they differ is as small as possible. 
	
	Let $V=W_{\hh(q,(W_G)_\PPP)}$, this $\mathcal{I}_n$-function is well-defined this way and is a step function with steps forming the $\mathcal{I}_n$-partition $\PPP'$. This latter $\mathcal{I}_n$-partition of $[0,1]$ is the image of $\PPP$ induced by the sample $\{X_1, \dots, X_q\}$ and the map $i \mapsto [\frac{i-1}{q}, \frac{i}{q})$. Let $\psi$ be a measure preserving $\mathcal{I}_n$-permutation of $[0,1]$ that satisfies that for each $i \in [t_\PPP]$ the volumes $\lambda(P_i \triangle \psi(P'_i))=|\lambda(P_i) - \lambda(P'_i)|$. Let $\PPP''$ denote the partition with classes $P''_i=\psi(P'_i)$ and $V'=(V)^\psi$ (note that $V'$ and $V$ are equivalent as graphons), furthermore let $N$ be the (random) subset of $[0,1]^2$ where the two functions $(W_G)_\PPP$ and $V'$ differ. Then  
	\begin{align}
	\E[\lambda(N)] \leq 2 \E[\sum_{i=1}^{t_\PPP}|\lambda(P_i) - \lambda(P'_i)|].
	\end{align}
	The random variables $\lambda(P'_i)$ for each $i$ can be interpreted as the proportion of positive outcomes out of $q$ independent Bernoulli trials with success probability $\lambda(P_i)$. By Cauchy-Schwarz it follows that 
	\begin{align}
	\E[\sum_{i=1}^{t_\PPP}|\lambda(P_i) - \lambda(P'_i)|] \leq \sqrt{t_\PPP \E[\sum_{i=1}^{t_\PPP}(\lambda(P_i) - \lambda(P'_i))^2]} \leq \sqrt{\frac{t_2}{q}}.
	\end{align}
	
	This calculation yields that $\E[\lambda(N)] \leq  \sqrt{\frac{4t_2}{q}} \leq \Delta/8$ by the choice of $q$, since $q \geq t_2^2$. Standard concentration result gives us that $\lambda(N)$ is also small in probability if $q$ is chosen large enough. For convenience, define the martingale $M_l=\E[\lambda(N)|X_1, \dots, X_l]$ for $1 \leq l\leq q$, and notice that the martingale differences are uniformly bounded, $|M_l-M_{l-1}| \leq \frac{4}{q}$. The Azuma-Hoeffding inequality then yields 
	\begin{align}\label{ch6:event2}
	\PP(\lambda(N) \geq \Delta/4 ) &\leq \PP(\lambda(N)\geq \E[\lambda(N)] + \Delta/8 ) \leq \exp(-\Delta^2q/2^{11}).
	\end{align} 
	
	Define the event $E_2$ that holds whenever $\lambda(N) \leq \Delta/4$, and condition on $E_1$ and $E_2$, the failure probability of each one is at most $\exp(-\frac{\Delta^2q}{2^{11}})$.
	
	It follows that $\|V^\psi-(W_G)_\PPP\|_1 \leq \lambda(N) \leq \Delta/4$. Now employing the triangle inequality and the bound (\ref{ch6:1eq6}) we get for all $\mathcal{I}_n$-partitions $\QQ$ into $t$ parts that 
	\begin{align*}
	\|W_G-(W_F)^\psi\|_{\square \QQ} &\leq \|W_G-(W_G)_\PPP\|_{\square \QQ}+\|(W_G)_\PPP-V^\psi \|_1+ \|V^\psi-(W_F)^\psi\|_{\square \psi(\QQ)} \leq \Delta.
	\end{align*}
	
	Now let $U \in \mathcal M_{\Delta,n}$ be arbitrary, and let $\PPP_U$ denote the partition consisting of the steps of $U$. Let $\phi$ be the $\mathcal{I}_n$-permutation of $[0,1]$ that is optimal in the sense that $d_{U, \PPP_U} (G)=\|U-(W_G)^\phi\|_{\square\PPP_U}$. Then 
	\begin{align*}
	d_{U, \PPP_U} (G) - d_{U, \PPP_U} (F) &\leq \|U-(W_G)^\phi\|_{\square\PPP_U} - \|U-(W_F)^{(\psi \circ \phi)}\|_{\square\PPP_U} \\
	&\leq \|W_G-(W_F)^\psi\|_{\square \phi^{-1}(\PPP_U)} \leq \Delta.
	\end{align*}
	The lower bound on the above difference can be handled in a similar way, therefore we have that $|d_{U, \PPP_U} (G) - d_{U, \PPP_U} (F)|\leq \Delta$ for every $U \in \mathcal M_{\Delta,n}$.

	We conclude the proof with mentioning that the failure probability of  the two events $E_1$ and $E_2$ taking place simultaneously is at most $\exp(-\frac{\Delta^2q}{2^7})$. 
	
\end{proof}

We are now ready to conduct the proof of the main result of the paper concerning graph parameters.

\begin{proofof}{\Cref{ch6:graphmain}}

	Let us fix $\varepsilon>0$ and the simple graph $G$ with $n$ vertices. We introduce the error parameter  $\Delta=\frac{k^{-2q_g^2(\varepsilon/2) }}{4k^2+1}$ and set $q\geq %2^{2 (2k^2+1)^{17k^2/\Delta^2}}
	2^{2^{(2k^4/\Delta^2)+4}}$. To establish the lower bound on $f(\G(q,G))$  not  much effort is required: we pick a $(k,m)$-coloring $\GG$ of $G$ that certifies the value $f(G)$, that is, $g(\GG)=f(G)$. Then the $(k,m)$-coloring of $\FFF=\G(q, \GG)$ of $\G(q,G)$  induced by $\GG$ satisfies $g(\FFF)\geq g(\GG)-\varepsilon/2$ with probability at least $1-\varepsilon/2$ since $q \geq q_g(\varepsilon/2)$, due to the testability property of $g$, which in turn implies $f(\G(q,G)) \geq f(G)-\varepsilon/2$ with probability at least $1-\varepsilon/2$.
	
	The problem concerning the upper bound in terms of $q$ on $f(\G(q,G))$ is the difficult part of the proof, the rest of it deals with this case. Recall that $\mathcal M_{\Delta,n}$ denotes the set of the $[0,1]$-valued proper $\mathcal{I}_n$-step functions that have at most $t_k(\Delta,1)$ steps. Let us condition on the event in the statement of \Cref{ch6:lemma2}, that is for all $U \in \mathcal M_{\Delta,n}$ it holds that $| d_{U, \PPP_U}(G) - d_{U, \PPP_U}(\G(q,G))| \leq \Delta.$  Let $\mathcal N$ be the set of all $k$-colored digraphons $\WW$ that are $\mathcal{I}_n$-step functions with at most $t_k(\Delta,1)$ steps $\PPP$, and that satisfy $d_{U,\PPP}(G) \leq 2\Delta$ for $U=\sum_{(\alpha,\beta)\in M} W^{(\alpha,\beta)}$.

	Our main step in the proof is that, conditioned on the aforementioned event, we construct for each $(k,m)$-coloring of $F$ a corresponding coloring of $G$ so that the $g$ values of the two colored instances are sufficiently close. We elaborate on this argument in the following.
	
	Let us fix an arbitrary $(k,m)$-coloring of $F$ denoted by $\FFF$. According to \Cref{ch3:sregk} there exists a $\WW$ that is a proper $\mathcal{I}_n$-step function with at most $t_k(\Delta,1)$ steps $\PPP_\WW$ such that there exists an  $\mathcal{I}_n$-permutation $\phi$ of $[0,1]$ such that $d_{\square\PPP}((\WW_{\FFF})^\phi,\WW) \leq \Delta$.  Therefore, by setting $U=\sum_{(\alpha,\beta)\in M} W^{(\alpha,\beta)}$ we have $d_{U,\PPP}(F) \leq \Delta$ and $U \in \mathcal M_{\Delta,n}$. This in turn implies that  $d_{U,\PPP}(G) \leq 2\Delta$, and consequently  $\WW \in \mathcal N$.
	It follows from \Cref{ch6:lemma1} that there exists a $(k,m)$-coloring of $G$ denoted by $\GG$ such that $d_{\square} (\WW,(\WW_\GG)^\psi) \leq 4k^2\Delta$ for some $\psi$ that is an $\mathcal{I}_n$-permutation of $[0,1]$. 
	
	Therefore we get that $\delta_{\square}(\GG,\FFF) \leq (4k^2+1)\Delta$. By virtue of \Cref{ch6:dev} we can assert that $|g(\GG)-g(\FFF)| \leq \varepsilon/2$. This finishes our argument, as $\FFF$ was arbitrary, and the failure probability of the  conditioned event in the analysis of the upper bound is at most $\varepsilon/2.$

	\qed
\end{proofof}

\section{Weak nondeterminism} \label{ch6:sec:weaknd}

We introduce an even more restrictive notion of nondeterminism corresponding to node colorings (\Cref{ch6:defndtest} used throughout the paper is a special case of the nondeterminism notion used commonly in complexity theory). Relying on this new concept we are able to improve on the  upper bound  of the sample complexity using a simplified version of our approach applied in the proof of \Cref{ch6:graphmain} without significant alterations.

We formulate the definition of a stronger property than the previously defined nondeterministic testability. The notion itself may seem at first more involved, but in fact it only corresponds to the case, where the witness parameter $g$ for a graph $G$ is evaluated only on the set of node-colorings instead of edge-colorings of $G$ in order to determine the $f$ value in the maximum expression. This modification will enable us to rely only on the cut-norm and the corresponding regularity lemmas instead of the cut-$\PPP$-norm that was employed in the general case, thus leads us to improved upper bounds on the sample complexity of $f$ with respect to that of $g$. This time we only treat the case of undirected graph colorings in detail, the directed case is analogous. 

We will introduce the set of colorings of $G$ called  node-$(k,m)$-colorings\sindex[notions]{node-$(k,m)$-coloring}. Let $\mathcal T=(T_1, \dots, T_k)$ be a partition of $V(G)$ and $\mathcal D =((D_1, \dots, D_m),( D_1', \dots, D_m'))$ be two partitions of $[k]^2$, together they induce two partitions, $\mathcal C=((C_1, \dots, C_m), (C_1', \dots, C_m'))$,  of $V(G)^2$ such that each class is of the form $C_i=\cup_{(\alpha,\beta) \in D_i} T_\alpha \times T_\beta$ and $C_i'=\cup_{(\alpha,\beta) \in D_i'} T_\alpha \times T_\beta$ respectively. A node-$(k,m)$-coloring of $G$ is defined by some $\mathcal C$ of the previous form and is the $2m$-tuple of simple graphs $\GG=(G_1, \dots, G_m, \tilde G_1, \dots, \tilde G_m)$ with $G_i=G[C_i]$ and $\tilde G_i= G^c[C_i']$. Here $G^c$ stands for the complement of $G$ (the union of $G$ and its complement is the undirected complete graph), and $G[C_i]$ is the union of induced labeled subgraphs of $G$ between $T_\alpha$ and $T_\beta$ for each $(\alpha,\beta) \in D_i$ for $\alpha \neq \beta$, in the case of $\alpha=\beta$ the term in the union is the induced labeled subgraph of $G$ on the node set $T_\alpha$.

These special edge-$2m$-colored graphs that can serve as node-$(k,m)$-colorings are given by a triple $(G, \mathcal T, \mathcal D)$, where $G$ is a simple graph, $\mathcal T$ is a partition of $V(G)$ into $k$ parts, and $\mathcal D$ is a pair of partitions of $[k]^2$ into $m$ parts. In the case of $r$-uniform hypergraphs for arbitrary $r\geq 2$ a node-$(k,m)$-colorings is also a triple $(G, \mathcal T, \mathcal D)$, the only difference in comparison to the graph case is that $\mathcal D$ is a pair of partitions of $[k]^r$ into $m$ parts, the rest of the description is analogous. %Let $\GG(G, \mathcal T, \mathcal D)$ denote the colored graph corresponding to the triple $(G, \mathcal T, \mathcal D)$.

\begin{definition}
	The $r$-uniform hypergraph parameter $f$ is weakly non-deterministically testable\sindex[notions]{graph parameter!weakly non-deterministically testable} if there exist integers $m$ and $k$ with $m \leq k^r$ and a testable edge-$2m$-colored directed $r$-graph parameter $g$ such that for any simple $r$-graph $G$ we have $f(G)=\max_{\GG} g(\GG)$, where the maximum goes over  the set of node-$(k,m)$-colorings of $G$.
	
\end{definition}

We present two approaches to handle this variant of the nondeterministic testability. The first method follows the proof framework introduced for the general case for graphs, its adaptation to the current setting results an improvement on the upper bound on the sample complexity to a $2$-fold exponential of the sample complexity of the witness a parameter and is also applicable to the corresponding property testing setting. The second idea entails the graph case as well as the $r$-uniform hypergraph setting for arbitrary rank $r$ of the weak setting. We manage to reduce the  upper bound on the sample complexity further to only exponential dependence. This approach does seems to be more problem specific, than the previous one, and it does not directly yield an analogous statement in property testing.

\subsection*{First approach}

The following lemma is the analogous result to \Cref{ch6:lemma1} that can be employed in the proof of the variant of \Cref{ch6:graphmain} for the special case of weakly nondeterministically testable graph parameters. 

\begin{lemma}\label{ch6:lemma3}
	Let $\varepsilon>0$, let $U$ and $V$ be arbitrary graphons with $\|U-V\|_{\square} \leq \varepsilon$, and also let $k \geq 2$ and $m \leq k^2$. For any $\UU=(U^{(1)},\dots,U^{(m)},\tilde U^{(1)},\dots,\tilde U^{(m)} )$ node-$(k,m)$-coloring of $U$ there exists a node-$(k,m)$-coloring of $V$ denoted by $\VV=(V^{(1)},\dots,V^{(k)},\tilde V^{(1)},\dots,\tilde V^{(m)})$ such that $d_\square(\UU,\VV)= \sum_{i=1}^m \|U^{(i)}-V^{(i)}\|_\square +\sum_{i=1}^m \|\tilde U^{(i)}-\tilde V^{(i)}\|_\square \leq 2k^2\varepsilon.$
	If $V=W_G$ for some simple graph $G$ on $n$ nodes and each $U^{(i)}$ is an $\mathcal{I}_n$-step function then there is a coloring $\GG$ of $G$ such that $d_\square(\UU,\WW_\GG) \leq 2k^2\varepsilon.$
\end{lemma}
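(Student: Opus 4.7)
The plan is to construct $\VV$ by simply reusing the combinatorial data that specifies $\UU$, and then carry out the cut-norm bookkeeping. A node-$(k,m)$-coloring of a graphon is determined by a partition $\mathcal{T}=(T_1,\dots,T_k)$ of $[0,1]$ together with the pair of partitions $\mathcal{D}=((D_1,\dots,D_m),(D_1',\dots,D_m'))$ of $[k]^2$. These data determine the measurable sets $C_i=\bigcup_{(\alpha,\beta)\in D_i} T_\alpha\times T_\beta$ and $C_i'=\bigcup_{(\alpha,\beta)\in D_i'} T_\alpha\times T_\beta$, and the components are $U^{(i)}=U\cdot \I_{C_i}$ and $\tilde U^{(i)}=(1-U)\cdot \I_{C_i'}$. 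I define $\VV$ by keeping the same $\mathcal{T}$ and $\mathcal{D}$, setting $V^{(i)}=V\cdot \I_{C_i}$ and $\tilde V^{(i)}=(1-V)\cdot \I_{C_i'}$. By construction $\VV$ is a valid node-$(k,m)$-coloring of $V$.

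With this choice, the differences decouple as $U^{(i)}-V^{(i)}=(U-V)\I_{C_i}$ and $\tilde U^{(i)}-\tilde V^{(i)}=-(U-V)\I_{C_i'}$, so the whole proof reduces to bounding $\|(U-V)\I_{C_i}\|_\square$. The key elementary observation is that $C_i$ is a disjoint union of $|D_i|$ honest rectangles $T_\alpha\times T_\beta$, hence for any measurable $S,T\subset[0,1]$ one has
\begin{align*}
\Bigl|\int_{S\times T}(U-V)\I_{C_i}\Bigr| \;\leq\; \sum_{(\alpha,\beta)\in D_i}\Bigl|\int_{(S\cap T_\alpha)\times(T\cap T_\beta)}(U-V)\Bigr| \;\leq\; |D_i|\cdot\|U-V\|_\square,
\end{align*}
where the last inequality is just the definition of the cut norm applied to each rectangular summand. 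Summing over $i$ and using that $\{D_i\}_{i=1}^m$ partitions $[k]^2$ gives $\sum_{i=1}^m \|U^{(i)}-V^{(i)}\|_\square \leq k^2\|U-V\|_\square$. The same estimate holds verbatim for the complementary pieces $\tilde U^{(i)}$ and $\tilde V^{(i)}$, and adding the two contributions yields $d_\square(\UU,\VV)\leq 2k^2\|U-V\|_\square\leq 2k^2\varepsilon$.

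For the graph case, write $V=W_G$ with $|V(G)|=n$. The hypothesis that every $U^{(i)}$ is an $\mathcal{I}_n$-step function, combined with $\sum_i U^{(i)}=U$ and $\sum_i \tilde U^{(i)}=1-U$ being $\mathcal{I}_n$-step as well, means the indicators $\I_{C_i}$ and $\I_{C_i'}$ are themselves $\mathcal{I}_n$-step (modulo a null set), so the partition $\mathcal{T}$ may be taken to be an $\mathcal{I}_n$-partition of $[0,1]$. This $\mathcal{T}$ corresponds to an honest partition of $V(G)$ into $k$ classes, and together with $\mathcal{D}$ it determines a node-$(k,m)$-coloring $\GG=(G_1,\dots,G_m,\tilde G_1,\dots,\tilde G_m)$ of $G$. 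A direct comparison of definitions shows $W_{G_i}=W_G\cdot\I_{C_i}=V^{(i)}$ and $W_{\tilde G_i}=(1-W_G)\cdot\I_{C_i'}=\tilde V^{(i)}$, so $\WW_\GG=\VV$ and the bound from the first part gives $d_\square(\UU,\WW_\GG)\leq 2k^2\varepsilon$.

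The proof is essentially bookkeeping around the combinatorial structure of node-colorings; the only substantive point is the cut-norm estimate $\|(U-V)\I_{C_i}\|_\square\leq|D_i|\cdot\|U-V\|_\square$, which uses only the triangle inequality plus the definition of $\|\cdot\|_\square$. Unlike the edge-coloring setting treated in \Cref{ch6:lemma1}, no auxiliary cut-$\PPP$-norm hypothesis is needed because $C_i$ is already a rectilinear set in the product partition, which is exactly why this weak nondeterminism variant leads to better sample-complexity bounds downstream. The only mild technical point to be careful about is ensuring $\mathcal{T}$ can be realized as an $\mathcal{I}_n$-partition in the graph case; this is immediate from the assumed $\mathcal{I}_n$-step structure of the components.
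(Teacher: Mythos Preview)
Your proposal is correct and follows essentially the same approach as the paper: reuse the partition data $(\mathcal{T},\mathcal{D})$ defining $\UU$ to build $\VV$ via $V^{(i)}=V\I_{C_i}$, $\tilde V^{(i)}=(1-V)\I_{C_i'}$, then bound each $\|U^{(i)}-V^{(i)}\|_\square$ by splitting $C_i$ into its $|D_i|$ constituent rectangles and applying the definition of the cut norm. Your write-up is in fact slightly more explicit than the paper's, which dispatches the graph case with a single sentence; the only soft spot is your deduction that $\mathcal{T}$ is an $\mathcal{I}_n$-partition from the $\mathcal{I}_n$-step hypothesis on the components, but the paper does not justify this step either and in the intended application $\mathcal{T}$ is an $\mathcal{I}_n$-partition by construction.
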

\begin{proof}
	Our approach is quite elementary: consider the partition $\mathcal T$ of $[0,1]$ and $\mathcal C$ that is a pair of partitions of $[0,1]^2$ corresponding to a pair of partitions $\mathcal D$ of $[k]^2$ as above that together with $U$ describe $\UU$, and define $V^{(i)}=V \I_{C_i}$ and $\tilde V^{(i)}=(1-V) \I_{C'_i}$ for each $i \in [m]$. Then 
	\begin{align}
	\|U^{(i)}-V^{(i)}\|_\square \leq \sum_{(\alpha,\beta) \in D_i} \|(U-V)\I_{T_\alpha \times T_\beta}\|_\square \leq \varepsilon |D_i|
	\end{align}
	for each $i \in [m]$, and the analogous upper bound applies to $\|\tilde U^{(i)}-\tilde V^{(i)}\|_\square$. Summing up over $i$ gives the result stated in the lemma.
	
	The argument showing the part regarding simple graphs is identical.
	
\end{proof}
Note that in \Cref{ch6:lemma1} we required $U$ and $V$ to be close in the cut-$\PPP$-norm for some partition $\PPP$, and $U$ to be a $\PPP$ step function to guarantee for each $\UU$ the existence of $\VV$ that is close to it in the cut distance of $k$-colored digraphons.
Using the fact that in the weakly non-deterministic framework cut-closeness of instances implies the cut-closeness of the sets  of their node-$(k,m)$-colorings we can formulate the next corollary of \Cref{ch6:graphmain} that is one of the main results of this subsection.
\begin{corollary}
	Let $f$ be a weakly non-deterministically testable graph parameter with witness parameter $g$ of node-$(k,m)$-colored graphs with the corresponding sample complexity $q_g$. Then $f$ is testable with sample complexity $q_f$, and there exists a $c>0$ that does depend only on $k$ and not on $f$ so that for any $\varepsilon>0$ have $q_f(\varepsilon) \leq \exp^{(2)}(cq^2_g(\varepsilon/2))$.
\end{corollary}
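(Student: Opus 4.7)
The plan is to mimic the structure of the proof of Theorem \ref{ch6:graphmain}, but to replace the cut-$\PPP$-norm regularity argument by a direct application of Lemma \ref{ch3:samp}, thereby saving one level of exponential iteration in the final bound. The key observation is that, in the weakly nondeterministic setting, Lemma \ref{ch6:lemma3} produces a matching coloring of $V$ under only the assumption that $U$ and $V$ are close in the ordinary cut norm (rather than in some cut-$\PPP$-norm), so there is no need to approximate the sampled graph $F$ by a step function with controlled steps before transferring colorings between $G$ and $F$.

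I would fix $\varepsilon > 0$ and a simple graph $G$ on $n$ vertices, set $\Delta = \frac{k^{-2 q_g^2(\varepsilon/2)}}{2k^2+1}$, and choose $q \geq 2^{100/\Delta^2}$, which is of the form $\exp^{(2)}(c\, q_g^2(\varepsilon/2))$ for a constant $c = c(k)$ since $1/\Delta^2$ is already simply exponential in $q_g^2(\varepsilon/2)$. The lower bound $f(\G(q,G)) \geq f(G) - \varepsilon/2$ with high probability follows routinely as in \Cref{ch6:graphmain}: one picks a node-$(k,m)$-coloring $\GG$ of $G$ witnessing $f(G) = g(\GG)$, restricts it to the random sample to obtain a coloring $\FFF$ of $F = \G(q,G)$, and uses $q \geq q_g(\varepsilon/2)$ together with the testability of $g$.

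For the upper bound I would condition on the event granted by Lemma \ref{ch3:samp} that $\delta_\square(W_G, W_F) \leq \Delta$, which by the choice of $q$ fails with probability at most $\varepsilon/2$. Given an arbitrary node-$(k,m)$-coloring $\FFF$ of $F$, one then applies Lemma \ref{ch6:lemma3} with $U = W_F$ (after an optimal measure-preserving overlay realizing the cut-distance bound) to produce a node-$(k,m)$-coloring $\GG$ of $G$ with $d_\square(\WW_\FFF, \WW_\GG) \leq 2k^2 \Delta$. Combining the two estimates yields $\delta_\square(\FFF, \GG) \leq (2k^2+1)\Delta = k^{-2 q_g^2(\varepsilon/2)}$, and Lemma \ref{ch6:dev} then delivers $|g(\FFF) - g(\GG)| \leq \varepsilon/2$. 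Since $\FFF$ is arbitrary and $g(\GG) \leq f(G)$, this gives $f(F) \leq f(G) + \varepsilon/2$ on the conditioned event and hence finishes the argument.

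The main obstacle I anticipate is a bookkeeping issue rather than a conceptual one: Lemma \ref{ch6:lemma3} is stated for a graphon $V$ together with a node-$(k,m)$-coloring of another graphon $U$, so to apply it to the sampled graph $F$ one must arrange that the partition $\mathcal T$ of $[0,1]$ underlying $\FFF$ is compatible with the canonical $\mathcal{I}_n$-partition associated with $G$, so that the resulting coloring of $G$ is an honest node-$(k,m)$-coloring rather than a fractional one. This can be handled by matching up the canonical $n$- and $q$-partitions via a measure-preserving map and then invoking the second clause of Lemma \ref{ch6:lemma3}, exactly as in the proof of \Cref{ch6:graphmain}. The payoff for this work is that one bypasses both the cut-$\PPP$-norm regularity lemma (Lemma \ref{ch3:sregk}) and the associated ground-state-energy sampling estimate (Lemma \ref{ch6:lemma2}), replacing them by the single application of Lemma \ref{ch3:samp}, which accounts for the reduction of the iterated exponential from height three to height two.
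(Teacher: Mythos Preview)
Your proposal is correct and follows the same route as the paper's sketch: condition on $\delta_\square(G,\G(q,G))\le\Delta$ via Lemma~\ref{ch3:samp}, use Lemma~\ref{ch6:lemma3} to transfer an arbitrary node-$(k,m)$-coloring of the sample back to $G$, and finish with Lemma~\ref{ch6:dev}. The paper additionally passes through the Weak Regularity Lemma~\ref{ch3:wregk} applied to $\WW_\FFF$ before invoking Lemma~\ref{ch6:lemma3}, but your observation that this intermediate step is unnecessary here (since Lemma~\ref{ch6:lemma3} requires only ordinary cut-closeness) is valid, and the $\mathcal{I}_n$-compatibility bookkeeping you flag at the end is exactly the detail the paper's sketch also leaves implicit.
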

\begin{proof}
	We will give only a sketch of the proof, as it is almost identical to that of \Cref{ch6:graphmain}, and we automatically refer to that, including the notation used in the current proof.
	Let $G$ be a simple graph on $n$ nodes, and let $\varepsilon>0$ be fixed, $q \geq \exp^{(2)}(cq^2_g(\varepsilon/2))$ for some constant $c>0$ that will be specified later. 
	The part concerning the lower bound of $f(\G(q,G))$ is completely identical to the general case.
	
	%The role of \Cref{lemma2} will now be played by an analogous tool.
	%\begin{lemma}\label{ch6:lemma4}
	% For $q \geq  300t'^5_k(\Delta/2)$ we have 
	%\begin{align}
	%| \delta_\square(U,G) - \delta_\square(U,\G(q,G))| \leq \Delta/2, 
	%\end{align}
	%for each $U \in \mathcal M'$ simultaneously, with probability at least $1-\exp(-c\sqrt{q}) $ for some constant $c>0$ depending only on $k$. 
	%\end{lemma}
	%The proof of this goes with some straight-forward modifications of that of \Cref{lemma2}, and was executed in \cite{BCL} along the lines, although the statement was not formulated there in this clarity. 
	
	%We have to change the two events $E_1$ and $E_2$ slightly in order for them to fit in the current proof framework. 
	
	For the upper bound set $\Delta=\exp(-cq^2_q(\varepsilon))$. We condition on the event $\delta_\square(G, \G(q,G)) \leq \Delta$, whose failure probability is sufficiently small due to \Cref{ch3:samp}, i.e. for $q \geq 2^{100/\Delta^2}$ it is at most
	$\exp\left(-4^{100/\Delta^2} \frac{\Delta^2}{50}\right)$. We define $c$ to be large enough so that the above lower bound on $q$ holds true whenever $q\geq q_f(\varepsilon)$.  Now we select an arbitrary node-$(k,m)$-coloring $\FFF$ of $\G(q,G)$ and apply the Weak Regularity Lemma for $2m$-colored graphons, \Cref{ch3:wregk}, in the $\mathcal{I}_n$-step function case with error parameter $\Delta/(2k^2+1)$ (keeping in mind that $m\leq k^2$) to get a tuple of $\mathcal{I}_n$-step functions forming $\UU$ with at most $t'_{2k^2}(\Delta/(2k^2+1))$ steps. We define the $\mathcal{I}_n$-step function graphon $U=\sum_{i=1}^m U_i$ and note that our condition implies that $\delta_\square(G,U) \leq 2 \Delta$, since $\delta_\square(G,U) \leq \delta_\square(G,\G(q,G)) + \delta_\square(\G(q,G),U).$ 
	To finish the proof we apply \Cref{ch6:lemma3}, it implies the existence of a node-$(k,m)$-coloring $\GG$ of $G$ so that  $\delta_{\square}(\GG,\FFF) \leq (2k^2+1)\Delta$. Applying   \Cref{ch6:dev} delivers the desired result by establishing that $|g(\FFF)-g(\GG)|\leq \varepsilon.$
	
\end{proof}  

\subsection*{Second approach}
Recall the notion of layered ground state energies of $r$-arrays of \cite{BCL2} and \cite{KM1} for arbitrary $r\geq 1$.

Let $r,k \geq 1$, and $G=(G^z)_{z \in [k]^r}$ be $[k]^r$-tuple of real $r$-arrays of size $n$, and $\mathcal T=(T_1, \dots, T_k)$ a partition of $[n]$ into $k$ parts. Then 
\begin{align}
\EEE_{\mathcal T}(G)=\sum_{z \in [k]^r} \frac{1}{n^r} \sum_{i_1, \dots, i_r=1}^n G^z(i_1, \dots, i_r) \prod_{j=1}^r \I_{T_{z_j}}(i_j),
\end{align}
and 
\begin{align}
\hat \EEE(G)=\max_{\mathcal T} \EEE_{\mathcal T}(G),
\end{align}
where the maximum runs over all integer partitions $\mathcal T$ of $[n]$ into $k$ parts. 

We will make use of the next generalization of \Cref{ch4:maincor1} for hypergraphs from \cite{AVKK2} and \cite{KM1} that deals with the testability of layered ground state energies, in particular the dependence of the upper bound on the sample complexity on the dimension $r$.

\begin{theorem}\cite{AVKK2,KM1} \label{ch4:main}
	Let $r\geq 1$, $q \geq 1$, and $\varepsilon >0$. Then for any $[q]^r$-tuple of $([-\|W\|_\infty,\|W\|_\infty],r)$-graphons $W=(W^z)_{z \in [q]^r}$ and $k \geq \Theta^4  \log(\Theta) q^r$ with $\Theta=\frac{2^{r+7}q^r r}{\varepsilon}$  we have 
	\begin{align}\label{ch4:eq1}
	\PPP(|\EEE(W)-\hat \EEE(\G (k,W))|>\varepsilon \|W\|_\infty)<\varepsilon.
	\end{align}
\end{theorem}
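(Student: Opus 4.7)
The plan is a sandwich argument, separately bounding $\hat\EEE(\G(k,W))$ above and below by $\EEE(W)\pm\varepsilon\|W\|_\infty/2$, each with failure probability at most $\varepsilon/2$. Write $X_1,\dots,X_k$ for the i.i.d.\ uniform samples underlying $\G(k,W)$.

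For the lower direction, fix a fractional partition $f^{*}=(f_1^{*},\dots,f_q^{*})$ of $[0,1]$ nearly optimal for $\EEE(W)$ and pull it back to the sample via $\hat f_i(j)=f_i^{*}(X_j)$. For each $z\in[q]^r$ the random variable $\EEE_{\hat f,z}(\G(k,W))$ is a bounded symmetric $U$-statistic of order $r$ in $X_1,\dots,X_k$ with expectation $\EEE_{f^{*},z}(W)$; summing over $z$ produces an aggregate statistic whose value changes by at most $O(rq^r\|W\|_\infty/k)$ when any single $X_j$ is swapped, so a bounded-differences (Azuma/McDiarmid) martingale estimate gives failure probability $\exp(-c\varepsilon^2 k/(rq^r)^2)$. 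This is below $\varepsilon/2$ as soon as $k$ satisfies the stated bound, so the existence of at least one integer partition of $[k]$ attaining energy within $\varepsilon\|W\|_\infty/2$ of $\EEE(W)$ is guaranteed.

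The upper direction is the crux, and must avoid a naive union bound over the $q^k$ integer partitions of $[k]$, which would require concentration faster than Azuma can deliver. The strategy, following \cite{AVKK2,KM1} and paralleling the $r=2$ argument behind \Cref{ch4:maincor1}, is to approximate the joint $[q]^r$-tuple $W$ in cut norm by a low-rank object $\widetilde W$, each of whose components has rank $s=O((q^r r/\varepsilon)^2)$. For such a $\widetilde W$ the supremum $\EEE_f(\widetilde W)$ over fractional partitions $f$ of $[0,1]$ is attained up to $O(\varepsilon\|W\|_\infty)$ by partitions parametrized by $O(s r q^r)$ scalar thresholds on the sample, so the effective net of partitions one really needs to union-bound over is of size polynomial in $s,r,q$. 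Spectral/tensor Bernstein inequalities then simultaneously control the deviation of the sampled low-rank object across all components, and Azuma on the polynomial-size net yields the union-bound estimate $\exp(-c\varepsilon^2 k/\Theta^2)\cdot\mathrm{poly}(\Theta)<\varepsilon/2$ precisely when $k\geq\Theta^4\log(\Theta)q^r$, where $\Theta=2^{r+7}q^rr/\varepsilon$.

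The hardest point, and the source of the extra $q^r$ factor over the matrix case of \Cref{ch4:maincor1}, is carrying out the low-rank approximation and threshold argument \emph{jointly} across all $q^r$ components, since the partition $\mathcal T$ is shared between them. Componentwise spectral decompositions do not suffice: one must aggregate the $r$-tensors $W^{z}$ into a single object before approximating, which multiplies the effective rank by $q^r$ and also inflates the Lipschitz constant of $\EEE_f$ with respect to $L^1$-perturbations of $f$ by $q^r$. Once both tails are in hand, combining them closes the argument with total failure probability at most $\varepsilon$ as claimed.
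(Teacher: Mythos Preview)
The paper does not prove \Cref{ch4:main} at all: it is quoted as an external result from \cite{AVKK2,KM1} and used as a black box in the second approach of \Cref{ch6:sec:weaknd}. There is therefore no ``paper's own proof'' to compare your proposal against; the authors simply invoke the statement.

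As to the substance of your outline, the overall sandwich strategy is the right one, and the lower-bound half via pulling back a near-optimal fractional partition and applying bounded differences is standard and correct. For the upper bound, however, your sketch leans on ``spectral/tensor Bernstein inequalities'' and low-rank approximation, which is not how the cited sources actually proceed for $r\geq 3$: there is no useful spectral decomposition for higher-order tensors, and the arguments in \cite{AVKK2,KM1} instead use a \emph{cut decomposition} (the weak regularity lemma for $r$-arrays) to write each $W^z$ as a short sum of cut tensors plus a remainder small in cut norm. The reduction to a polynomial-size net of candidate partitions then comes from the combinatorics of the cut decomposition (each cut tensor is determined by $r$ subsets, and on the sample these are parametrized by linear thresholds), not from rank or spectra. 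Your paragraph about ``aggregating the $r$-tensors into a single object before approximating'' and the accompanying justification for the extra $q^r$ factor is vague and does not match the actual mechanism; the $q^r$ arises simply because there are $q^r$ layers to regularize and to union-bound over. If you rewrite the upper-bound half around cut decomposition rather than spectral approximation, the outline becomes a faithful summary of the proofs in the cited references.
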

%\begin{theorem} \cite{KM1}\label{ch6:thm.nrgsamp}
%Let $r\geq 1$, $k \geq 1$, and $\varepsilon >0$. Then for any real $[k]^r$-layered $r$-array $G=(G^z)_{z \in [k]^r}$ and $q \geq \left( \frac{2^7 r^2 k^{2r}}{\varepsilon}\right)^{6} \log(\frac{1}{\varepsilon})$ we have that
%\begin{align}\label{ch6:1eq101}
%\PP(|\hat \EEE(G)-\hat \EEE(\G (q,G))|>\varepsilon \|G\|_\infty)<\exp\left(-\frac{q \varepsilon^{4}}{2^{15} r^2 k^{4r}} \right).
%\end{align}
%\end{theorem}

We are ready to state and prove the other main result of the section that includes a further improvement fo the upper bound on the sample complexity compared to our first approach in the weak nondeterministic testing setting.

\begin{theorem}
	Let $r\geq 1$ and $f$ be a weakly non-deterministically testable $r$-graph parameter with witness parameter $g$ of node-$(k,m)$-colored graphs, and let the corresponding sample complexity functions be $q_f$ and $q_g$. Then $f$ is testable and there exist a $c_{r,k}>0$  that does depend only on $r$ and $k$, but not on $f$ such that for any $\varepsilon>0$ we have $q_f(\varepsilon) \leq \exp(c_{r,k} q_g(\varepsilon/8)).$ 
\end{theorem}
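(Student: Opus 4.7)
The plan is to mirror the scheme of the proof of \Cref{ch6:graphmain} but to replace the semi-weak regularity step by a direct invocation of the layered ground state energy sampling result, \Cref{ch4:main}. Fix $\varepsilon>0$, a simple $r$-graph $G$, and set $q_0 = q_g(\varepsilon/8)$. A node-$(k,m)$-coloring is specified by a pair $(\mathcal{T},\mathcal{D})$, where $\mathcal{T}$ is a $k$-partition of $V(G)$ and $\mathcal{D}$ ranges over a finite set depending only on $k$, $m$, and $r$, so it suffices to control, for each fixed $\mathcal{D}$, the deviation between $\max_\mathcal{T} g((G,\mathcal{T},\mathcal{D}))$ and $\max_{\mathcal{T}'} g((\G(q,G),\mathcal{T}',\mathcal{D}))$, and to union-bound over the finitely many $\mathcal{D}$. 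The lower-bound direction $f(\G(q,G))\geq f(G)-\varepsilon/8$ is immediate as in \Cref{ch6:graphmain}: the $g$-optimal coloring of $G$ restricts to a node coloring of $\G(q,G)$ whose $g$-value lies within $\varepsilon/8$ of $f(G)$ by testability of $g$, once $q \geq q_0$.

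For the upper bound, I would exploit testability of $g$ as follows: $\E[g(\G(q_0,\GG))]$ is within $\varepsilon/8$ of $g(\GG)$ for every sufficiently large node-colored instance $\GG$, and it expands as $\sum_{\FF \in \mathcal{F}} g(\FF)\,p_\FF(\GG)$, where $\mathcal{F}$ is the finite set of labeled $2m$-colored $r$-graphs on $q_0$ vertices and $p_\FF(\GG)$ is the probability that $\G(q_0,\GG)$ equals $\FF$. Hence, up to a total error of $\varepsilon/4$, the two suprema above reduce to the suprema of the same linear functional in the indicator vectors $(\I_{T_\alpha})_{\alpha\in[k]}$. The essential structural observation is that each $p_\FF((G,\mathcal{T},\mathcal{D}))$ is a multilinear form of degree $q_0$ in those indicators: it is a weighted sum, over maps $\phi\colon[q_0]\to V(G)$ whose image exhibits the uncolored $r$-graph structure prescribed by $\FF$ inside $G$, of products $\prod_{j=1}^{q_0} \I_{T_{c_\FF(j)}}(\phi(j))$, with $c_\FF(j)\in[k]$ the color class assigned to the $j$-th vertex of $\FF$.

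Consequently, the map $\mathcal{T} \mapsto \sum_\FF g(\FF)\,p_\FF((G,\mathcal{T},\mathcal{D}))$ takes exactly the form of a layered ground state energy $\EEE_\mathcal{T}(H)$ for a $[k]^{q_0}$-tuple $H$ of $q_0$-dimensional arrays built from the incidence tensor of $G$ and the weights $g(\FF)$. Since vertex sampling commutes with extracting the corresponding sub-array, the analogous maximum on $\G(q,G)$ equals $\hat\EEE(H|_{\G(q,G)})$. Applying \Cref{ch4:main} with array dimension $q_0$, color count $k$, and accuracy $\varepsilon/8$ then yields $|\hat\EEE(H) - \hat\EEE(H|_{\G(q,G)})| \leq \varepsilon/8$ except on an event of probability at most $\varepsilon/8$, provided $q \geq \exp(c_{r,k}\,q_0)$ for a sufficiently large constant $c_{r,k}$ absorbing the polynomial-in-$q_0$ and exponential-in-$k$ factors in the hypothesis of \Cref{ch4:main}. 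Combining all error terms and union-bounding over the finite set of admissible $\mathcal{D}$ delivers the claimed bound $q_f(\varepsilon) \leq \exp(c_{r,k}\,q_g(\varepsilon/8))$.

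The main obstacle is the precise identification of the weighted density $\sum_\FF g(\FF)\,p_\FF$ as a layered ground state energy of a $[k]^{q_0}$-tuple of $q_0$-arrays whose sub-array on a sampled vertex set represents the same functional on the subsampled $r$-graph; once this bookkeeping is in place, all quantitative estimates are off-the-shelf, and no tower-type loss arises because we bypass the Szemer\'edi-type regularity step entirely.
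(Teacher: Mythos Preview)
Your proposal is correct and follows essentially the same route as the paper: replace $g$ by its $q_0$-sample average via testability, recognize the resulting $\max_{\mathcal T}$ expression as a layered ground state energy $\hat\EEE(H)$ of a $[k]^{q_0}$-tuple of $q_0$-arrays, apply \Cref{ch4:main}, and union-bound over the finitely many $\mathcal D$. The only cosmetic difference is that the paper treats both deviation directions at once through the inequality $|f(G)-f(\G(q,G))|\le \max_{\mathcal D}|f^{\varepsilon,\mathcal D}(G)-f^{\varepsilon,\mathcal D}(\G(q,G))|+\varepsilon/2$, so your separate lower-bound argument is redundant (but harmless).
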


\begin{proof}
	Let $r\geq 1$ be arbitrary, and $f$ be a weakly nondeterministically testable $r$-graph parameter with a certificate specified by the constants $k$ and $m \leq k^r$, and the testable $2m$-colored $r$-graph parameter $g$. Then 
	\begin{align*}
	f(G)=\max_{\mathcal T, \mathcal D} g(\GG(G,\mathcal T, \mathcal D)), 
	\end{align*}
	where the maximum goes over every pair $(\mathcal T, \mathcal D)$, where  $\mathcal T$ is a partition of $V(G)$ into $k$ parts, and $\mathcal D$ is a pair of partitions of $[k]^r$ into $m$ parts, and $\GG(G,\mathcal T, \mathcal D)$\sindex[symbols]{g@$\GG(G,\mathcal T, \mathcal D)$} is the edge $2m$-colored graph defined by its parameters as seen above. %Note that the node relabeling invariance of $G$ only affects $G$ and $\mathcal T$, but not $\mathcal D$.
	Define the for each fixed $\mathcal D$ the node-$k$-colored $r$-graph (i.e., a simple $r$-graph together with a $k$-coloring of its nodes) parameter $g^{\mathcal D}(G,\mathcal T)=g(\GG(G,\mathcal T, \mathcal D))$ and the  simple $r$-graph parameter $f^{\mathcal D}(G)=\max_{\mathcal T} g^{\mathcal D}(G,\mathcal T)$. 
	
	Let $\varepsilon>0$ be arbitrary, define \begin{align*}g^{\varepsilon}(\GG(G, \PPP, \mathcal D))=\sum_{F,\mathcal T} t(\GG(F,\mathcal T, \mathcal D),\GG(G,\PPP, \mathcal D))g(\GG(F,\mathcal T, \mathcal D)),\end{align*} 
	where the sum goes over all simple $r$-graphs $F$ on $q_0=q_g(\varepsilon/8)$ vertices and  partitions $\mathcal T$  of $[q_0]$ into $k$ parts.
	By the testability of $g$ we have  
	\begin{align}\label{ch6:1eq2}
	|g^{\varepsilon}(\GG(G, \PPP, \mathcal D))-g(\GG(G, \PPP, \mathcal D))|  \leq \varepsilon/4,
	\end{align}
	for each permitted tuple $(G, \PPP, \mathcal D)$. Analogously we define
	\begin{align*}g^{\varepsilon, \mathcal D}(G,\PPP)=\sum_{F,\mathcal T} t(\GG(F,\mathcal T, \mathcal D),\GG(G,\PPP, \mathcal D))g(\GG(F,\mathcal T, \mathcal D)),
	\end{align*}
	and 
	\begin{align*}f^{\varepsilon, \mathcal D}(G)=\max_{\mathcal T}g^{\varepsilon, \mathcal D}(G,\mathcal T).
	\end{align*}
	It follows from (\ref{ch6:1eq2}) that for any $G$ simple $r$-graph
	\begin{align*}
	|f^{\varepsilon, \mathcal D}(G)-f^{\mathcal D}(G)|  \leq \varepsilon/4,
	\end{align*}
	and for any $\varepsilon>0$ and $q\geq 1$ we have 
	\begin{align}\label{ch6:1eq22}
	|f(G)-f(\G(q,G))| \leq \max_{\mathcal D} |f^{\varepsilon, \mathcal D}(G)-f^{\varepsilon, \mathcal D}(\G(q,G))|+\varepsilon/2.
	\end{align}
	For any  $\varepsilon>0$ and $\mathcal D$ that is a pair of partitions of $[k]^r$ into $m$ parts the parameter $f^{\varepsilon, \mathcal D}$ can be re-written as an energy of $q_0$-arrays: For $G$ of size $[n]$ let $H=(H^z)_{z \in [k]^{q_0}}$ so that for each $z \in [k]^{q_0}$ the real $q_0$-array  $H^z$ is defined by \begin{align*}H^z(i_1, \dots, i_{q_0})= g^{\mathcal D}(G[(i_1, \dots, i_{q_0})],\PPP_z(i_1, \dots, i_{q_0}))\end{align*} for each $(i_1, \dots, i_{q_0}) \in [n]^{q_0}$, where $\PPP_z(i_1, \dots, i_{q_0})=(P_1, \dots, P_k)$ is a partition of $(i_1, \dots, i_{q_0})$ given by $P_l=\{ \, i_j \mid z_j=l\,\}$ for $l \in [k]$. Then for each $\mathcal T$ that is a partition of $[n]$ into $k$ parts we can assert that
	\begin{align*}
	g^{\varepsilon, \mathcal D}(G,\mathcal T) &= \sum_{z \in [k]^{q_0}} \frac{1}{n^{q_0}} \sum_{i_1, \dots, i_r=1}^n  g^{\mathcal D}(G[(i_1, \dots, i_{q_0})],\PPP_z(i_1, \dots, i_{q_0})) \prod_{j=1}^r \I_{T_{z_j}}(i_j) \\
	&=  \sum_{z \in [k]^{q_0}} \frac{1}{n^{q_0}} \sum_{i_1, \dots, i_r=1}^n H^z(i_1, \dots, i_{q_0}) \prod_{j=1}^r \I_{T_{z_j}}(i_j) \\
	&= \EEE_{\mathcal T} (H), 
	\end{align*}
	and further 
	\begin{align*}
	f^{\varepsilon, \mathcal D}(G)=\max_{\mathcal T} g^{\varepsilon, \mathcal D}(G,\mathcal T)=\max_{\mathcal T}  \EEE_{\mathcal T} (H)= \hat \EEE(H).
	\end{align*}
	Analogously it holds for any $q\geq q_0$ that $f^{\varepsilon, \mathcal D}(\G(q,G))= \hat \EEE(\G(q,H)).$

	This implies by \Cref{ch4:main} that for $q \geq \Theta^4  \log(\Theta)$ with $\Theta=\frac{2^{q_0+11}k^{q_0} q_0}{\varepsilon}$ and each fixed $\mathcal D$ that 
	\begin{align*}
	\PP(|f^{\varepsilon, \mathcal D}(G)-f^{\varepsilon, \mathcal D}(\G(q,G))| > \varepsilon/2) <2 \exp\left( -\frac{\varepsilon^2q}{32 q_0^2} \right).
	\end{align*}
	The probability that the event in the previous formula occurs for some $\mathcal D$ is at most $k^{2r k^r} 2 \exp\left( -\frac{\varepsilon^2q}{32 q_0^2} \right)$, therefore by recalling (\ref{ch6:1eq22}) we can conclude that there exists a constant $c_{r,k}>0$ not depending on other specifics of $f$ such that for each simple graph $G$ and $q \geq \exp(c_{r,k}q_g(\varepsilon/8))$ it holds that 
	\begin{align*}
	\PP(|f(G)-f(\G(q,G))| > \varepsilon) < \varepsilon.
	\end{align*}
	
\end{proof}

\section{Further research}

The sample complexity upper bounds provided in this paper in both the general and the special case are not known to be sharp, moreover, the lower bounds available at the moment are only trivial ones. An interesting open question is to improve both upper and lower bounds in the above setting, perhaps upper bound conditions (such as polynomial testability) for the witness parameter are of relevance here.

%Another interesting problem is whether analogous results to ours hold true for the setting of uniform hypergraphs of higher rank, even the equivalence of nondeterministic testing and traditional testing has to be established, our work above only deals with the weak case.

\section*{Acknowledgement}

We thank Laci Lov\'asz for an interesting discussion connected to the subject of this paper.

\bibliographystyle{notplainnat}
\bibliography{thesis_refs}

\end{document}